\documentclass[11pt]{article}

\usepackage[margin=1in]{geometry}
\usepackage{amsmath,amssymb,amsthm}
\usepackage{braket}
\usepackage{microtype}
\usepackage[colorlinks=true,linkcolor=blue,citecolor=blue,urlcolor=blue]{hyperref}

\newtheorem{theorem}{Theorem}[section]
\newtheorem{lemma}[theorem]{Lemma}
\newtheorem{proposition}[theorem]{Proposition}
\newtheorem{corollary}[theorem]{Corollary}
\theoremstyle{definition}
\newtheorem{definition}[theorem]{Definition}
\newtheorem{conjecture}[theorem]{Conjecture}
\theoremstyle{remark}

\newcommand{\QE}{Q_{\mathrm E}}
\newcommand{\rv}{\operatorname{rv}}
\newcommand{\Rel}{\operatorname{Rel}}
\newcommand{\Span}{\operatorname{span}}
\newcommand{\supp}{\operatorname{supp}}
\newcommand{\E}{\mathbb E}
\newcommand{\1}{\mathbf 1}
\newcommand{\ip}[2]{\left\langle #1,#2\right\rangle}
\newcommand{\norm}[1]{\left\lVert #1\right\rVert}

\title{Two exact quantum queries use at most six variables}
\author{Yaqiao Li\footnote{Shenzhen University of Advanced Technology, Shenzhen, China, liyaqiao@suat-sz.edu.cn}}
\date{\today}

\begin{document}
\maketitle

\begin{abstract}
We give a direct analytic proof that every total Boolean function computable exactly with at most two quantum queries depends on at most six input variables. This bound is sharp. A previous derivation relied on a numerical classification of Boolean functions on at most four variables; to the best of our knowledge, ours is the first unconditional proof. The main difficulty beyond the one-query case is that query-path vectors need not be mutually orthogonal, so interference prevents the scalar weight-counting argument from applying. We overcome this difficulty by establishing a rigid support-and-branch decomposition: after the coordinates in the first-query support are fixed, the second query induces a family of exact one-query branches. This structural description may be of independent interest.
\end{abstract}

\section{Introduction}

Let $F:\{+1,-1\}^n\to\{+1,-1\}$ be a total Boolean function.  Its exact
quantum query complexity $\QE(F)$ is the minimum number of oracle queries used
by a quantum algorithm that outputs $F(x)$ with probability one on every
input.  This differs from the bounded-error model, where a fixed error
probability is allowed; see de Wolf's lecture notes~\cite{de2019quantum}.
Quantum query complexity is also a useful framework for understanding quantum
algorithms more generally~\cite{ambainis2018understanding}.

Write $[n]:=\{1,\ldots,n\}$ and let $x^{\oplus i}$ be obtained from $x$ by
negating $x_i$.  Coordinate $i$ is \emph{relevant} to $F$ if
$F(x)\ne F(x^{\oplus i})$ for some $x$.  Let $\Rel(F)$ be the set of relevant
coordinates and $\rv(F):=|\Rel(F)|$.  Define
\begin{equation}
  J_{\mathrm E}(k):=\max\{\rv(F): F\text{ is a total Boolean function and }
  \QE(F)\le k\}.
  \label{eq:def-J}
\end{equation}
Our goal is an analytic proof that $J_{\mathrm E}(2)\le 6$.

The polynomial method gives
$\deg(F)\le2\QE(F)$~\cite{BBCMW2001,de2019quantum}.  For $d=\deg(F)$,
Nisan and Szegedy~\cite{NisanSzegedy1994} proved
$\rv(F)\le d2^{d-1}$, while Wellens~\cite{Wellens2022} proved
$\rv(F)\le4.394\cdot2^d$.  Consequently,
\[
  J_{\mathrm E}(k)\le\min\{k,4.394\}\,4^k,
\]
which gives $4,32,192$ for $k=1,2,3$.  Moreover, the order $4^k$ cannot
be improved using only $\deg(F)\le2k$: Chiarelli, Hatami, and
Saks~\cite{ChiarelliHatamiSaks2020} constructed, for every $d$, a degree-$d$
Boolean function with $3\cdot2^{d-1}-2$ relevant variables.  

The one-query case is known.  Chen, Ye, and Li~\cite{ChenYeLi2020} showed that, up to permutations and negations of inputs and output, every nonconstant total function of exact query complexity one is either a literal or the parity of two variables.  Hence $J_{\mathrm E}(1)=2$.  
For two queries, Barnum, Saks, and Szegedy~\cite{BarnumSaksSzegedy2003} gave a semidefinite programming characterization.  
Montanaro, Jozsa, and Mitchison~\cite{MontanaroJozsaMitchison2015} used it to produce numerical optimal-success-probability tables for functions on at most four variables, while explicitly noting in ~\cite[Appendix A]{MontanaroJozsaMitchison2015} that ``An entry `1' means that the SDP solver claims a solution with success probability greater than 0.999; note that this does not strictly speaking imply the existence of an exact algorithm using that number of queries.'' 
Ye, Yang, and Huang~\cite{YeYangHuang2024} subsequently characterized the functions with $\QE(F)=2$, while explicitly noting that their result is conditioned on  ``Assuming the correctness of the numerical result for $n =4$'' from ~\cite{MontanaroJozsaMitchison2015}.  Their characterization implies $J_{\mathrm E}(2)\le 6$.  We prove the same bound without a numerical premise. Conceptually, we provide a support-and-branch structured analysis of two-query  quantum exact  algorithms.

\begin{theorem}
\label{thm:main}
$J_{\mathrm E}(2)\le 6$. 
\end{theorem}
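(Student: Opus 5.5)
We model an exact two-query algorithm as $U_2 O_x U_1 O_x U_0\ket{0}$. The oracle acts as $O_x\ket{i,w}=x_i\ket{i,w}$ for $i\in[n]$, and acts trivially on a query-free index $i=0$. We then write the final state as a multilinear polynomial in $x$ of degree at most $2$, with vector coefficients. After the first query the state is $\ket{\psi_1(x)}=\sum_i x_i\ket{i}\ket{a_i}$, where $\ket{i}\ket{a_i}$ are the components of $U_0\ket0$. Let $S:=\{i\ge1: a_i\ne 0\}$ be the \emph{first-query support}. After $U_1$ and the second query, the final state is
\[
\ket{\psi_2(x)}=\sum_{j}x_j\ket{j}\otimes\sum_i x_i\ket{b_{ij}},\qquad x_0:=1,
\]
with vectors $b_{ij}$ supported on the $j$-th block. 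Exactness says that $\Pi_{F(x)}U_2\ket{\psi_2(x)}$ has norm one for every $x$. Equivalently, for every pair of inputs $x,y$ with $F(x)\ne F(y)$ we need $\ip{\psi_2(x)}{\psi_2(y)}=0$.

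The plan is the support-and-branch decomposition announced in the abstract. \textbf{Step 1:} fix the coordinates in $S$ to any value $z\in\{\pm1\}^S$. For fixed $z$ the first-query contribution $\sum_{i\in S\cup\{0\}} z_i\ket{i}\ket{a_i}$ is a fixed unit vector. So the remaining computation on $x_{\bar S}$ is an exact one-query algorithm for the restriction $F_z:=F|_{x_S=z}$: the first query has become useless on coordinates outside $S$. \textbf{Step 2:} apply the one-query classification of Chen, Ye, and Li~\cite{ChenYeLi2020}. Each branch $F_z$ is constant, a literal, or a two-variable parity, so $|\Rel(F_z)|\le 2$. \textbf{Step 3:} bound $|S|$ relevant-wise. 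Here we also use that the second query's amplitude on coordinate $j$ is $\sum_i x_i b_{ij}$. Orthogonality of the outputs across $x$ should force $F$ to depend on $x_S$ in a constrained way; the goal is at most $4$ relevant coordinates inside $S$, or a tradeoff between $|S\cap\Rel(F)|$ and the branch sizes. \textbf{Step 4:} show that the branch variables are shared across $z$. If different $z$ used disjoint pairs, the union $\bigcup_z\Rel(F_z)$ could be large. So we must prove that the vectors $b_{ij}$ for $j\notin S$ force all nonconstant branches to live on a common set of at most two coordinates, or at most $6-|S\cap \Rel F|$ of them.

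The main obstacle is Step 4 combined with Step 3, the coupling across branches. The branch vectors for different $z$ are not orthogonal, so interference terms $\ip{b_{ij}}{b_{i'j}}$ with $i,i'\in S$ prevent a simple mass-counting argument. My first attempt would be the following. The weight of the final state on block $j\notin S$ is $W_j(z)=\norm{\sum_{i\in S\cup\{0\}} z_i b_{ij}}^2$, which is a quadratic form in $z$. A one-query exact branch on coordinate $j$ needs a specific weight on $j$: the literal case needs all weight on $j$, and the parity case needs weight split exactly equally between two coordinates. Summing over $j$ then gives $\sum_{j\notin S}W_j(z)\le 1$. We then analyze the Fourier expansion of $W_j$ in $z$. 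Its constant term is $\sum_i\norm{b_{ij}}^2$ and is independent of $z$, while $\E_z W_j(z)$ must be large for every $j$ that is relevant in some branch. Averaging over $z$ would bound the number of such $j$ by $2$ times a constant. This would give at most $2$ (or $4$) outside coordinates, balanced against coordinates in $S$, and a careful case split on $|S|$ should yield $6$.

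Finally, a symmetric argument handles Step 3. Apply the adjoint (time-reversed) view, or the degree bound $\deg F\le4$, to the dependence on $x_S$. The pattern of which $z$ give which branch type must itself be computable using the single remaining query. This bounds the relevant variables in $S$, and combining the bounds gives $\rv(F)\le6$.
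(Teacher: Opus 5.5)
Your Steps 1--2 (fix the first-query support $S$ and obtain exact one-query branches with at most two relevant variables each) are the paper's starting point. Steps 3--4, which carry the proof, are stated as goals, and the mechanism you propose for Step 4 does not work.

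First, the premise about weights is off. A literal branch does not put all weight on $j$: by Lemma~\ref{lem:one-query}, every relevant coordinate of an exact one-query branch has weight exactly $1/2$. More importantly, $W_j(z)=1/2$ is forced only for those $z$ whose type contains $j$, and nothing you have written bounds the number of such $z$ from below. A priori $j$ could be relevant in a single branch, with constant branches elsewhere. Averaging $\sum_{j\notin S}W_j\le1$ then bounds the outside coordinates only by about $2^{|S|+1}$. The lower bound on $\E_z W_j$ that you assume is the heart of the matter.

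The paper gets it from two ideas absent from your plan. (i) Branches of different types, including a constant one, have mutually orthogonal final states for \emph{all} $y,y'$: the cross Gram depends only on $y\odot y'$, and for each $w$ one can choose $y$ so that the two branch values differ. (ii) Projecting $s_i(\varepsilon)$ onto the span of $\{s_i(\varepsilon'):\varepsilon'\in\mathcal C_T\}$ gives an affine map whose support is exactly $\mathcal C_T$. Lemma~\ref{lem:affine-support} then gives density at least $1/2$, so at most two nonconstant types occur, and if two occur their cells are complementary short halfcubes. Your alternative target, that all nonconstant branches share a common set of at most two coordinates, is false: $G_2$ has the two types $\{3,4\}$ and $\{5,6\}$.

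Even granting the density bound, your averaging gives only $\E_zW_j\ge1/4$, hence at most four outside coordinates. Adding your target of four inside gives $8$, not $6$, and your ``careful case split'' is not specified. The sharp count comes from a degree reduction. Project the final state onto the span of the states over $H=\{\chi_B=\tau\}$; by (i) this projection vanishes off $H$. By Lemma~\ref{lem:halfcube-factor} it therefore equals $\ell_H h_H$ with $\deg h_H\le1$. After parametrizing $H\times Y$ by a full cube, the restriction $F_H$ is computed by a normalized degree-one state. Lemma~\ref{lem:degree-one-state} gives $\rv(F_H)\le2$; it needs a realification, since only the real parts of the cross inner products vanish. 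Doing the same on $H^c$ yields $\rv(F)\le|B|+2+2\le6$.

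For Step 3, $\deg F\le4$ alone says nothing about $S$, and the ``time-reversed view'' has to be made concrete. Multiplying the exactness identity by $O_x^P$ and pairing with $Q_j\ket\phi$ gives $\deg(x_jF)\le3$ for every $j\in S$ (Lemma~\ref{lem:first-query-degree}). Since $\rv(F)\ge5$ forces $\deg F\ge3$ by Nisan--Szegedy, every Fourier set of size at least $3$ contains $S$, so $|S|\le4$. This bound is not meant to be added to an outside count. In a counterexample with $\rv(F)\ge7$ it leaves at least three relevant coordinates outside $S$. That forces two distinct nonconstant types, and hence the halfcube structure above.
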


In fact, the corresponding lower bound also holds, hence $J_{\mathrm E}(2) = 6$. Below we describe the lower bound construction due to Mukherjee
and Maitra~\cite{MukherjeeMaitra2021}. We mention that our proof structure resembles the structure of the lower bound construction.  

Mukherjee and Maitra~\cite{MukherjeeMaitra2021} constructed, for every $k\ge1$, a
Boolean function on $2^{k+1}-2$ relevant variables with exact query
complexity $k$.  Set $G_1(x_1,x_2):=x_1x_2$ where $x_i \in \{+1,-1\}$ are sign variables, and recursively define, using fresh selector
sign variables $u,v$ and disjoint inputs $x,y$,
\[
  G_k(u,v,x,y):=\frac{1+uv}{2}G_{k-1}(x)
       +\frac{1-uv}{2}G_{k-1}(y).
\]
In particular,
\begin{equation}
  G_2(x_1,\ldots,x_6):=\frac{1+x_1x_2}{2}x_3x_4
       +\frac{1-x_1x_2}{2}x_5x_6.
  \label{eq:sharp-example}
\end{equation}
The parity of two signs can be computed exactly with one quantum query by
Deutsch's algorithm.  Compute the selector $uv$, store its value coherently,
and use a controlled relabeling of the query register to run an exact
algorithm for the selected copy of $G_{k-1}$, without an intermediate
measurement.  Induction gives $\QE(G_k)\le k$.  On the other hand,
\[
  G_k=\frac{G_{k-1}(x)+G_{k-1}(y)}2
      +\frac{uv}{2}\bigl(G_{k-1}(x)-G_{k-1}(y)\bigr).
\]
The degree-$2k$ terms from the two disjoint copies cannot cancel, so
$\deg(G_k)=2k$.  The polynomial method gives $\QE(G_k)\ge k$, and hence
$\QE(G_k)=k$.  Every variable is relevant: a variable in either copy is
witnessed after selecting that copy, while $u$ and $v$ are witnessed by
choosing inputs to the two copies with opposite outputs.  Thus, if $N_k$ is the number of relevant variables,
\[
  N_1=2,\qquad N_k=2+2N_{k-1}=2^{k+1}-2,
\]
and therefore
\[
  J_{\mathrm E}(k)\ge2^{k+1}-2.
\]
The known cases $k=1,2$ attain this lower bound. This  motivates the following conjecture, which to the best of our knowledge has not appeared previously in the literature.

\begin{conjecture}
\label{conj}
For every positive integer $k$, $J_{\mathrm E}(k)=2^{k+1}-2$.
\end{conjecture}

For example, whether $\QE(F)\le3$ implies
$\rv(F)\le14$ is still open.  The study of Conjecture~\ref{conj} may therefore lead to a deeper understanding of exact constant-query algorithms.  

For the lower bound construction $G_2$, the first query has physical support only on $\{1,2\}$; its coherently stored parity routes the second query to $\{3,4\}$ or $\{5,6\}$.  Our proof recovers an analogous two-cell routing structure from any hypothetical exact two-query algorithm with at least seven relevant coordinates. Specifically, the upper bound proceeds by contradiction in two steps.  First, at least three relevant coordinates of a counterexample must lie outside the first-query support.  Second, fixing that support yields exact one-query branches.  We show that three outside coordinates force two branch types, whose cells are complementary short halfcubes; a degree-reduction argument then gives the count $2+2+2=6$.  Both steps require structural information about the states, besides the degree of $F$. 

\medskip
{\bf\noindent Acknowledgment:} The author thanks the reviewers for useful suggestions on improving the presentation.

{\bf\noindent Declaration of AI usage:} The analysis, proof ideas, and writing were partially developed through extensive interactions with AI tools. The author takes full responsibility for the contents of this article.

\section{Preliminaries}
\label{sec:prelim}

Throughout the paper, \(F:\{+1,-1\}^n\to\{+1,-1\}\) remains the total
sign-valued Boolean function of interest.  When a symbol denotes a
function, we use capital \(F,G,\ldots\) for sign-valued
Boolean functions, and lower-case  \(f,g,h,\varphi,\ldots\) denotes auxiliary scalar- or vector-valued maps, especially Hilbert-space-valued state maps.  
%Full-cube inputs are denoted by \(x\in\{+1,-1\}^n\); other letters are used when the input is split into blocks, a promised domain is parameterized, or an auxiliary cube variable is introduced.  
We retain the
notation \([n]\), \(x^{\oplus i}\), \(\Rel(G)\), and \(\rv(G)\) from
the introduction.  A pair \((x,x^{\oplus i})\) on which the function values
differ is called a
\emph{bichromatic \(i\)-edge}.  We use \(\1\) for an all-ones input of
the dimension determined by context.

\subsection{Boolean functions and Fourier analysis notation}

For \(A\subseteq[n]\), define the Fourier character
\[
    \chi_A(x):=\prod_{i\in A}x_i,
    \qquad \chi_\varnothing(x):=1.
\]
The characters form an orthonormal basis for real-valued functions on the
Boolean cube under the Fourier inner product
\[
    \langle f,g\rangle
    :=\E_x[f(x)g(x)],
\]
where the expectation is uniform.  Hence every \(g:\{+1,-1\}^n\to
\mathbb R\) has the unique expansion
\begin{equation}
    g(x)=\sum_{A\subseteq[n]}\widehat g(A)\chi_A(x),
    \qquad
    \widehat g(A)=\E_x[g(x)\chi_A(x)].
    \label{eq:fourier-expansion}
\end{equation}
Its multilinear degree is
    $\deg(g):=\max\{|A|:\widehat g(A)\neq0\}$.
For a vector-valued function \(h:\{+1,-1\}^n\to V\), where \(V\) is a
real or complex vector space, the same definitions apply. 
 
Fourier expansion is
unique, and multiplication by a coordinate has the explicit effect
\begin{equation}
    x_i\chi_A(x)=\chi_{A\mathbin\triangle\{i\}}(x),
    \qquad
    x_i g(x)
    =\sum_{B\subseteq[n]}
       \widehat g(B\mathbin\triangle\{i\})\chi_B(x).
    \label{eq:fourier-toggle}
\end{equation}
Here \(\mathbin\triangle\) denotes symmetric difference.  Equivalently,
    $\widehat{x_i g}(B)
    =\widehat g(B\mathbin\triangle\{i\})$.

Finally, we need the degree lower bound ~\cite{BBCMW2001}
\begin{equation}
    \deg(F)\leq2\QE(F).
    \label{eq:polynomial-method}
\end{equation}

\subsection{The exact query model and the direct two-query form}

Let \(\mathcal H\) be a complex Hilbert space.     Adjoin a
dummy query label \(0\) and set \(x_0:=1\).  In the standard phase-oracle
model, \(\mathcal H\) has an orthogonal decomposition
\(R_0,\ldots,R_n\) of the identity and the query is
\[
    O_x^R:=\sum_{i=0}^n x_iR_i, 
    \qquad
    R_iR_j=\delta_{ij}R_i,
    \qquad
    \sum_{i=0}^nR_i=I.
\]
Here \(\delta_{ij}\) is the Kronecker delta.    The dummy
sector \(R_0\) is unchanged by the oracle.

A \(T\)-query algorithm has final state
\[
    U_TO_x^RU_{T-1}\cdots U_1O_x^RU_0\ket0,
\]
where the \(U_i\) are input-independent unitaries.  The exact quantum
query complexity \(\QE(F)\) is the least \(T\) for which a final measurement of
such a state returns \(F(x)\) with certainty on every input \(x\).

A standard two-query algorithm therefore has final state
\[
    U_2O_x^RU_1O_x^RU_0\ket0,
\]
where the \(U_i\) are input-independent unitaries.  Put
\[
    \ket\phi:=U_1U_0\ket0,
    \qquad
    P_i:=R_i,
    \qquad
    Q_j:=U_1R_jU_1^*.
\]
Then, \(P_0,\ldots,P_n\) and \(Q_0,\ldots,Q_n\) are orthogonal
decompositions of the identity.  Define
\begin{equation}
    O_x^P:=\sum_{i=0}^n x_iP_i,
    \qquad
    O_x^Q:=\sum_{j=0}^n x_jQ_j.
    \label{eq:direct-oracles}
\end{equation}
Since \(O_x^Q=U_1O_x^RU_1^*\),
\[
    O_x^PO_x^Q\ket\phi
    =O_x^RU_1O_x^RU_0\ket0.
\]
The remaining unitary \(U_2\) may be absorbed into the final measurement.
Thus, every two-query algorithm can be studied in the direct form
\begin{align}
    \ket{\varphi_1(x)}
    &:=O_x^Q\ket\phi
      =\sum_{j=0}^n x_jQ_j\ket\phi,
    \label{eq:first-state}\\
    \ket{\varphi_2(x)}
    &:=O_x^P\ket{\varphi_1(x)}
      =O_x^PO_x^Q\ket\phi.
    \label{eq:second-state}
\end{align}
%Accordingly, \(Q\) is always the first-query decomposition and \(P\) the second-query decomposition.  A one-query algorithm is written \(O_x^Q\ket\phi\) after the same absorption.  An algorithm using fewer than two queries may be padded by a query supported entirely on the dummy label.

By a standard dilation, the final two-outcome measurement may be assumed
projective.  Let \(\mathcal H_+\) and \(\mathcal H_-\) be its two
orthogonal output subspaces after the final unitary has been absorbed.
For \(\tau\in\{+1,-1\}\), set
\[
    \mathcal H_\tau
    :=\begin{cases}
       \mathcal H_+,&\tau=+1,\\
       \mathcal H_-,&\tau=-1.
      \end{cases}
\]
Let \(\mathsf M\) be the self-adjoint unitary acting as \(+I\) on
\(\mathcal H_+\) and as \(-I\) on \(\mathcal H_-\).  Exactness is the
pointwise identity
\begin{equation}
    \mathsf M\ket{\varphi_2(x)}
    =F(x)\ket{\varphi_2(x)}
    \qquad(x\in\{+1,-1\}^n).
    \label{eq:exactness}
\end{equation}
In particular, final states belonging to inputs with different function
values are orthogonal.

For \(0\leq i,j\leq n\), define the \emph{query-path vectors} and their
weights by
\begin{equation}
    v_{ij}:=P_iQ_j\ket\phi,
    \qquad
    a_{ij}:=\norm{v_{ij}}^2.
    \label{eq:path-vectors}
\end{equation}
Then, 
\begin{equation}
    \ket{\varphi_2(x)}
    = O_x^PO_x^Q\ket\phi
    =\sum_{i,j=0}^n x_ix_jv_{ij}.
    \label{eq:path-expansion}
\end{equation}
This is a vector-valued function of degree at most two.
Because \(\sum_iP_i=I\) and \(\sum_jQ_j=I\),
\begin{equation}
    \sum_{i,j=0}^n a_{ij}=1.
    \label{eq:path-normalization}
\end{equation}
For each first-query label \(j\), put
\begin{equation}
    q_j:=\sum_{i=0}^na_{ij}
        =\norm{Q_j\ket\phi}^2.
    \label{eq:first-query-weight}
\end{equation}
The \emph{first-query support} is the set of physical
coordinates
\begin{equation}
    S:=\{j\in[n]:q_j>0\}.
    \label{eq:first-query-support}
\end{equation}
The dummy label \(0\) is not a physical coordinate and is not included in
\(S\).  If \(j\notin S\), then \(Q_j\ket\phi=0\), equivalently
\(v_{ij}=0\) for every \(i\).

\section{Preparation}
\label{sec:preparation}

\subsection{Exact one-query functions and degree-one states}

In the notation of Section~\ref{sec:prelim}, a normalized one-query state
is the vector-valued function
\begin{equation}
    f:\{+1,-1\}^n\longrightarrow\mathcal H,
    \qquad
    f(x):=O_x^Q\ket\phi
         =f_0+\sum_{i=1}^n x_if_i,
    \qquad
    f_i:=Q_i\ket\phi.
    \label{eq:one-query-state}
\end{equation}
Here \(f_0\) is the dummy-sector component.  Since the \(Q_i\) are
mutually orthogonal projections,
\begin{equation}
    \ip{f_i}{f_j}=0\quad(i\neq j),
    \qquad
    \sum_{i=0}^n\norm{f_i}^2=1.
    \label{eq:one-query-sector-orthogonality}
\end{equation}

The following lemma has been proved in Chen, Ye, and
Li~\cite{ChenYeLi2020}. We give a proof below for completeness since it is also very short. Note that the simplicity of deriving $\rv(G) \le 2$ critically relies on the orthogonality constraints of query-path vectors \eqref{eq:one-query-sector-orthogonality}, which does not necessarily hold for two-query algorithms.

\begin{lemma}
\label{lem:one-query}
Suppose that the state \(f(x)\) in \eqref{eq:one-query-state} computes a
total Boolean function
    $G:\{+1,-1\}^n\longrightarrow\{+1,-1\}$
exactly.  Then, for some \(\tau\in\{+1,-1\}\), the function \(G\) is one
of
\[
    G(x)=\tau,
    \qquad
    G(x)=\tau x_r,
    \qquad
    G(x)=\tau x_rx_s,
\]
where \(r\neq s\) in the last case.  In particular, \(\rv(G)\leq2\).
Moreover, every relevant
coordinate \(r\) satisfies
    $\norm{f_r}^2=\frac12$.
\end{lemma}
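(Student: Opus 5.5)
Exactness gives orthogonality: whenever \(G(x)\neq G(y)\) we have \(\ip{f(x)}{f(y)}=0\). The plan is to compute this inner product in terms of the sector weights \(w_i:=\norm{f_i}^2\). By \eqref{eq:one-query-sector-orthogonality},
\[
    \ip{f(x)}{f(y)}=\sum_{i=0}^n x_iy_i\,w_i=1-2\sum_{i\in D(x,y)}w_i,
\]
where \(D(x,y)=\{i\in[n]:x_i\neq y_i\}\) and \(\sum_{i=0}^n w_i=1\). Hence \(G(x)\neq G(y)\) forces \(\sum_{i\in D(x,y)}w_i=\tfrac12\). This reduces the lemma to a purely combinatorial statement about weights.

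\textbf{Step 1 (relevant weights).} If \(r\) is relevant, there is a bichromatic \(r\)-edge \((x,x^{\oplus r})\). Its difference set is \(\{r\}\), so \(w_r=\tfrac12\). This is the ``moreover'' claim.

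\textbf{Step 2 (at most two relevant coordinates).} Since \(\sum_i w_i=1\), at most two coordinates can have weight \(\tfrac12\). So \(\rv(G)\le 2\), and any coordinate outside \(\Rel(G)\) carries weight \(0\), apart from the possible dummy weight.

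\textbf{Step 3 (the three cases).}
\begin{itemize}
    \item If \(\rv(G)=0\), then \(G\) is constant.
    \item If \(\rv(G)=1\), then \(G\) depends only on \(x_r\) and is nonconstant, so \(G=\tau x_r\).
    \item If \(\rv(G)=2\) with \(\Rel(G)=\{r,s\}\), then \(w_r=w_s=\tfrac12\) and all other weights vanish, including \(w_0\). So \(f(x)=x_rf_r+x_sf_s\), and \(G\) is a function of \((x_r,x_s)\) in which both coordinates are relevant.
\end{itemize}
To finish the last case, I would enumerate the two-variable Boolean functions in which both variables are relevant. These are the parities \(\pm x_rx_s\) and the AND/OR-type functions (degree two with nonzero linear part). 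An AND-type function has a bichromatic pair \(x,y\) differing in both coordinates. Then \(\sum_{i\in D(x,y)}w_i=w_r+w_s=1\neq\tfrac12\), which is a contradiction. Indeed, for any \(G\) of this kind some pair of antipodal inputs on \((x_r,x_s)\) must be bichromatic unless \(G(x)=G(-x)\) for all inputs, and that condition together with both coordinates being relevant forces \(G=\tau x_rx_s\). So \(G=\tau x_rx_s\).

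No step is a real obstacle. The only point needing care is the last case analysis, namely checking that a non-parity two-variable function with both coordinates relevant must have a bichromatic antipodal pair. That follows directly from a four-input truth table.
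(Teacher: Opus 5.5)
Your proposal is correct and follows the paper's proof essentially step for step: weight $\tfrac12$ from a bichromatic $r$-edge, $\rv(G)\le 2$ from normalization, and in the case $\Rel(G)=\{r,s\}$ your observation that a pair differing in both $r$ and $s$ has inner product $1-2(w_r+w_s)=-1\neq 0$ is the paper's remark that $f(x^{\oplus r\oplus s})=-f(x)$ differs from $f(x)$ only by a global phase, which forces $G$ to be a function of $x_rx_s$. One small inaccuracy, which your argument never uses: the aside in Step~2 that irrelevant non-dummy coordinates carry zero weight is false when $G$ is constant (e.g.\ $f(x)=x_1f_1$), so you should claim that all other weights vanish only in the case $\rv(G)=2$, where normalization gives it.
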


\begin{proof}
Let \(r\) be relevant, and choose a bichromatic \(r\)-edge
\((x,x^{\oplus r})\).  Exactness makes the two endpoint states
orthogonal.  Using \eqref{eq:one-query-sector-orthogonality} and
\(x_0=x_0^{\oplus r}=1\), we obtain
\[
    0
    =\ip{f(x)}{f(x^{\oplus r})}\\
    =\sum_{i=0}^n x_ix_i^{\oplus r}\norm{f_i}^2\\
    =\sum_{i=0}^n\norm{f_i}^2-2\norm{f_r}^2\\
    =1-2\norm{f_r}^2.
\]
Thus,
    $\norm{f_r}^2=\frac12$
    for 
    $r\in\Rel(G)$.
Since the total weight is one, this proves \(\rv(G)\leq2\).

If no coordinate is relevant, then \(G\) is a constant sign.  If \(r\)
is the only relevant coordinate, then \(G\) is a nonconstant Boolean
function of the single sign \(x_r\), and hence
\(G(x)=\tau x_r\) for some \(\tau\).

It remains to consider \(\Rel(G)=\{r,s\}\).  The two relevant sectors
each have weight \(1/2\), so normalization gives
    $f_i=0$
    for 
    $i\in\{0,1,\ldots,n\}\setminus\{r,s\}$.
Consequently,
\[
    f(x)=x_rf_r+x_sf_s,
    \qquad
    f(x^{\oplus r\oplus s})=-f(x).
\]
Here \(x^{\oplus r\oplus s}\) denotes the input obtained by flipping both
coordinates \(r\) and \(s\).
The two states differ only by a global phase and therefore have the same
output.  Hence,
    $G(x^{\oplus r\oplus s})=G(x)$.
Thus, \(G\) is constant on each level set of \(x_rx_s\).  Since both
coordinates are relevant, \(G\) is nonconstant, and
\(G(x)=\tau x_rx_s\) for some \(\tau\in\{+1,-1\}\).
\end{proof}

The next lemma applies to a general normalized degree-one vector-valued
function.  Its coefficient vectors need not initially lie in orthogonal
query sectors.  

\begin{lemma}
\label{lem:degree-one-state}
Let \(h:\{+1,-1\}^n\to\mathcal H\) be the \(\mathcal H\)-valued
function
    $h(x)=h_0+\sum_{i=1}^n x_ih_i$,
and suppose that \(\norm{h(x)}=1\) for every
\(x\in\{+1,-1\}^n\).  Let
\(G:\{+1,-1\}^n\to\{+1,-1\}\) be a Boolean function.  If there are fixed
orthogonal output subspaces \(\mathcal H_+\) and \(\mathcal H_-\) such
that
    $h(x)\in\mathcal H_{G(x)}$
for every \(x\), then \(\QE(G)\leq1\).  Consequently,
\(\rv(G)\leq2\).
\end{lemma}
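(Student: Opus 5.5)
The plan is to reduce to an honest one-query algorithm, so that Lemma~\ref{lem:one-query} applies. The obstruction is that the coefficient vectors \(h_0,\ldots,h_n\) are not assumed to lie in orthogonal sectors. First I would extract as much orthogonality as the pointwise normalization \(\norm{h(x)}=1\) forces.

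\textbf{Step 1: real orthogonality.} With \(x_0:=1\), expand
\[
\norm{h(x)}^2=\sum_{i=0}^n\norm{h_i}^2+\sum_{0\le i<j\le n}2\,\mathrm{Re}\ip{h_i}{h_j}\,x_ix_j .
\]
The right-hand side is a multilinear polynomial in \(x_1,\ldots,x_n\). The term \(x_0x_j\) is the character \(\chi_{\{j\}}\), and the term \(x_ix_j\) with \(i,j\ge1\) is \(\chi_{\{i,j\}}\). Since the function is identically \(1\), uniqueness of the Fourier expansion \eqref{eq:fourier-expansion} gives two facts. First, \(\sum_i\norm{h_i}^2=1\). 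Second, \(\mathrm{Re}\ip{h_i}{h_j}=0\) for all \(i\neq j\) in \(\{0,\ldots,n\}\). This yields only real orthogonality; the imaginary parts of these inner products are unconstrained.

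\textbf{Step 2: realify and complexify.} Let \(\mathcal H_{\mathbb R}\) be \(\mathcal H\) viewed as a real Hilbert space with inner product \(\mathrm{Re}\ip{\cdot}{\cdot}\). Set \(\mathcal K:=\mathcal H_{\mathbb R}\otimes_{\mathbb R}\mathbb C\), and identify each \(h_i\) with \(h_i\otimes1\). The complex inner product on \(\mathcal K\) restricts to \(\mathrm{Re}\ip{\cdot}{\cdot}\) on real vectors, so by Step~1 the vectors \(h_0,\ldots,h_n\) are mutually orthogonal in \(\mathcal K\). Complex orthogonality in \(\mathcal H\) implies real orthogonality, so \((\mathcal H_+)_{\mathbb R}\perp(\mathcal H_-)_{\mathbb R}\) in \(\mathcal H_{\mathbb R}\). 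Hence their complexifications \(\mathcal K_\pm\) are orthogonal subspaces of \(\mathcal K\), and \(h(x)\in\mathcal K_{G(x)}\) for every \(x\).

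\textbf{Step 3: build the algorithm.} For \(i\ge1\), let \(Q_i\) be the projection onto \(\mathbb C h_i\), with \(Q_i=0\) if \(h_i=0\). Let \(Q_0:=I-\sum_{i\ge1}Q_i\), and put \(\ket\phi:=\sum_{i=0}^n h_i\), which is a unit vector. By orthogonality, \(Q_i\ket\phi=h_i\) for every \(i\), including \(i=0\). Therefore \(O_x^Q\ket\phi=h(x)\), which is exactly the form \eqref{eq:one-query-state}. The two-outcome projective measurement onto \(\mathcal K_+\oplus\mathcal K_-\) then outputs \(G(x)\) with certainty. This proves \(\QE(G)\le1\), and Lemma~\ref{lem:one-query} gives \(\rv(G)\le2\).

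\textbf{Main obstacle.} The delicate point is Step~2: only the real parts of the inner products vanish. I expect the realification/complexification device to be what resolves this. If a reader prefers to avoid it, there is a fallback: rerun the proof of Lemma~\ref{lem:one-query} directly with \(\mathrm{Re}\ip{\cdot}{\cdot}\). On a bichromatic \(r\)-edge, the endpoint states are complex-orthogonal, hence real-orthogonal, and the computation \(0=1-2\norm{h_r}^2\) goes through verbatim. The parity-case argument is also unchanged, because \(h(x^{\oplus r\oplus s})=-h(x)\).
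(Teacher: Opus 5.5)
Your proof is correct and follows essentially the paper's route. The same Fourier comparison gives $\operatorname{Re}\ip{h_i}{h_j}=0$, and your complexification $\mathcal H_{\mathbb R}\otimes_{\mathbb R}\mathbb C$ is unitarily equivalent to the paper's $\mathcal H\oplus\overline{\mathcal H}$ via $h\otimes1\mapsto(h\oplus\overline h)/\sqrt2$; both arguments then build a one-query algorithm with orthogonal sectors and invoke Lemma~\ref{lem:one-query}. The only cosmetic difference is that you take $Q_i$ to be the projection onto $\mathbb C h_i$ directly, rather than querying a label register and then applying an isometry, and this is equally legitimate in the paper's model, where the query decomposition is an arbitrary orthogonal resolution of the identity.
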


\begin{proof}
Expanding the constant-norm identity gives
\[
    1=\norm{h(x)}^2
    =\norm{h_0}^2+\sum_{i=1}^n\norm{h_i}^2
    \quad+2\sum_{i=1}^n x_i\operatorname{Re}\ip{h_0}{h_i}
    \quad+2\sum_{1\leq i<j\leq n}
       x_ix_j\operatorname{Re}\ip{h_i}{h_j}.
\]
Fourier uniqueness therefore gives
\begin{equation}
    \operatorname{Re}\ip{h_i}{h_j}=0\quad(i\neq j),
    \qquad
    \sum_{i=0}^n\norm{h_i}^2=1.
    \label{eq:degree-one-real-orthogonality}
\end{equation}

Equation~\eqref{eq:degree-one-real-orthogonality} gives only vanishing
real parts, not necessarily \(\ip{h_i}{h_j}=0\). We consider realification of \(\mathcal H\).  The \emph{conjugate Hilbert
space} \(\overline{\mathcal H}\) is a second copy of \(\mathcal H\).
If \(\overline h\) denotes the copy of \(h\), scalar multiplication and
the inner product are defined by
\[
    \alpha\,\overline h:=\overline{\overline\alpha h},
    \qquad
    \ip{\overline h}{\overline u}_{\overline{\mathcal H}}
    :=\overline{\ip{h}{u}_{\mathcal H}}.
\]
The symbol \(\oplus\) denotes the orthogonal direct sum: its elements are
pairs \(h\oplus\overline u = (h, \overline u)\), and
\[
    \ip{h\oplus\overline u}{h'\oplus\overline{u'}}
    =\ip{h}{h'}+\ip{\overline u}{\overline{u'}}.
\]
Thus, the two components are orthogonal.  For each coefficient, set
\[
    \widetilde h_i
    :=\frac{h_i\oplus\overline{h_i}}{\sqrt2}
    \in\mathcal H\oplus\overline{\mathcal H}
    \qquad(0\leq i\leq n).
\]
Then
\[
    \ip{\widetilde h_i}{\widetilde h_j}
    =\frac12\left(
       \ip{h_i}{h_j}+\overline{\ip{h_i}{h_j}}
      \right)
    =\operatorname{Re}\ip{h_i}{h_j}.
\]
Thus, the nonzero vectors \(\widetilde h_i\) are mutually orthogonal in
the complex Hilbert space \(\mathcal H\oplus\overline{\mathcal H}\), and
their squared norms sum to one.

Let \(e_0,\ldots,e_n\) be orthonormal query-label states, with \(Q_i\)
the projection onto the \(i\)-th label sector, and prepare
\[
    \ket\psi:=\sum_{i=0}^n\norm{\widetilde h_i}\,e_i.
\]
The phase oracle acts as \(e_i\mapsto x_ie_i\), with \(x_0=1\), so
\[
    O_x^Q\ket\psi
    =\sum_{i=0}^n x_i\norm{\widetilde h_i}\,e_i.
\]
Apply a fixed isometry sending each \(e_i\) with \(\widetilde h_i\neq0\) to
\(\widetilde h_i/\norm{\widetilde h_i}\).  The resulting state is
\[
    \widetilde h(x)
    :=\sum_{i=0}^n x_i\widetilde h_i
     =\frac{h(x)\oplus\overline{h(x)}}{\sqrt2},
    \qquad x_0:=1.
\]
Here we use that every \(x_i\in\{+1,-1\}\) is real, so
\(\overline{x_ih_i}=x_i\overline{h_i}\).
The isometry extends to a unitary after enlarging the workspace if
necessary.  For \(\tau\in\{+1,-1\}\), define
    $\widetilde{\mathcal H}_\tau
    :=\mathcal H_\tau\oplus\overline{\mathcal H_\tau}$.
These two output spaces are orthogonal, and
    $\widetilde h(x)\in\widetilde{\mathcal H}_{G(x)}$.
Thus, this is an exact one-query algorithm for \(G\).  Lemma~\ref{lem:one-query}
gives \(\rv(G)\leq2\).
\end{proof}

\subsection{Two-query functions admitting mutually orthogonal query-path}

The simplicity of the one-query argument relies crucially on the mutual
orthogonality of the vectors \(Q_i\ket\phi\).  For two queries, the path
vectors
\[
    v_{ij}=P_iQ_j\ket\phi
\]
are automatically orthogonal when their first indices differ, because
they then lie in different \(P\)-sectors.  For a fixed \(i\), however,
the vectors \(v_{ij}\) with different \(j\) need not be orthogonal.  Their
interference is the main difficulty in the two-query problem.

It is useful first to impose mutual orthogonality on all path vectors.  In
that case, the scalar-weight argument for one query extends immediately
and gives an even stronger bound.

\begin{proposition}
\label{prop:orthogonal-case}
Suppose that the query-path vectors
    $\{v_{ij}:0\leq i,j\leq n\}$
of an exact two-query algorithm for $F$ are pairwise orthogonal, then, \(\rv(F)\leq4\).
\end{proposition}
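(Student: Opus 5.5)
The plan is to copy the scalar weight-counting argument from the proof of Lemma~\ref{lem:one-query}, now applied to the path expansion \eqref{eq:path-expansion}. Orthogonality of all path vectors turns every inner product of final states into a signed sum of the scalar weights $a_{ij}$. Each relevant coordinate will then force a fixed amount of weight, and a double-counting bound will finish the proof.

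\emph{Step 1: the inner product on an edge.} Fix a relevant coordinate $r\in[n]$ and a bichromatic $r$-edge $(x,x^{\oplus r})$. Exactness \eqref{eq:exactness} makes the two final states orthogonal. Using \eqref{eq:path-expansion} and the assumed pairwise orthogonality of the $v_{ij}$, we get
\[
  0=\ip{\varphi_2(x)}{\varphi_2(x^{\oplus r})}
   =\sum_{i,j=0}^n x_ix_j\,x^{\oplus r}_i x^{\oplus r}_j\,a_{ij}.
\]
The sign $x_ix_jx^{\oplus r}_ix^{\oplus r}_j$ is $-1$ exactly when precisely one of $i,j$ equals $r$. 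If both indices equal $r$, the two flips cancel; if neither does, nothing changes. Also $x_0=1$ is never flipped.

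\emph{Step 2: each relevant coordinate carries weight $1/2$.} Define
\[
  w_r:=\sum_{\substack{0\le i,j\le n\\ |\{i,j\}\cap\{r\}|\text{ odd}}} a_{ij},
\]
the total weight of the ordered pairs in which exactly one index is $r$. Combining Step 1 with \eqref{eq:path-normalization} gives $0=1-2w_r$, so $w_r=\tfrac12$ for every $r\in\Rel(F)$.

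\emph{Step 3: double counting.} Sum $w_r$ over $r\in\Rel(F)$. A pair $(i,j)$ with $i=j$ contributes to no $w_r$. A pair with $i\ne j$ contributes $a_{ij}$ to $w_r$ only for $r\in\{i,j\}\cap\Rel(F)$, so to at most two of the $w_r$. The dummy label $0$ is never a relevant coordinate, which only lowers the count. Hence
\[
  \tfrac12\,\rv(F)=\sum_{r\in\Rel(F)}w_r\le 2\sum_{i\ne j}a_{ij}\le 2,
\]
which gives $\rv(F)\le4$.

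There is no real obstacle here. The only point needing care is the sign bookkeeping in Step 1, namely that the diagonal pair $(r,r)$ keeps sign $+1$, together with the observation that each off-diagonal pair is charged at most twice. This factor of two, compared with once in the one-query case, is exactly why the bound is $4$ rather than $2$.
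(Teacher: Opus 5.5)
Your proof is correct and is essentially the paper's argument: the same orthogonality computation on a bichromatic $r$-edge gives weight $\tfrac12$ on the pairs with exactly one index equal to $r$, and the same at-most-twice double count then gives $\rv(F)/2\le 2$. One cosmetic fix: the displayed condition $|\{i,j\}\cap\{r\}|$ odd, read literally as a set condition, also admits the diagonal pair $(r,r)$ because $\{r,r\}=\{r\}$, so state it as ``exactly one of $i,j$ equals $r$'', which is what your prose and Step~3 already use.
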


\begin{proof}
Let \(k\in\Rel(F)\), and choose a bichromatic \(k\)-edge
\((x,x^{\oplus k})\).  Exactness makes the endpoint states orthogonal.
Using the path expansion
    $\ket{\varphi_2(x)}=\sum_{i,j=0}^n x_ix_jv_{ij}$
and pairwise orthogonality, we calculate
\[
    0
    =\ip{\varphi_2(x)}{\varphi_2(x^{\oplus k})}
    =\sum_{i,j=0}^n
       x_ix_jx_i^{\oplus k}x_j^{\oplus k}a_{ij}.
\]
For each pair \((i,j)\),
\[
    x_ix_jx_i^{\oplus k}x_j^{\oplus k}
    =\begin{cases}
       -1,&\text{if exactly one of \(i,j\) equals \(k\)},\\
       +1,&\text{otherwise}.
      \end{cases}
\]
In particular, the \((k,k)\)-term has sign \(+1\).  Recalling
\(\sum_{i,j=0}^n a_{ij}=1\), we obtain
\[
    0
    =\sum_{i,j=0}^n a_{ij}
      -2\left(
       \sum_{\substack{0\leq j\leq n\\j\neq k}}a_{kj}
       +\sum_{\substack{0\leq i\leq n\\i\neq k}}a_{ik}
      \right)\\
    =1-2\left(
       \sum_{\substack{0\leq j\leq n\\j\neq k}}a_{kj}
       +\sum_{\substack{0\leq i\leq n\\i\neq k}}a_{ik}
      \right).
\]
Therefore
\begin{equation}
    \sum_{\substack{0\leq j\leq n\\j\neq k}}a_{kj}
    +\sum_{\substack{0\leq i\leq n\\i\neq k}}a_{ik}
    =\frac12, 
    \qquad \forall\ k\in\Rel(F).
    \label{eq:orthogonal-half-mass}
\end{equation}
Summing over \(k\in\Rel(F)\), since each weight \(a_{ij}\) is counted at most twice, we obtain
    $\frac{\rv(F)}{2}
    \leq2\sum_{i,j=0}^na_{ij}=2$,
which proves \(\rv(F)\leq4\).
\end{proof}

\subsection{Short halfcubes}

In this section we develop some tools needed for the second step of our proof.

\begin{lemma}
\label{lem:affine-sign}
Let \(G:\{+1,-1\}^n\to\{+1,-1\}\) be a Boolean function.  If
    $G(x)=\alpha_0+\sum_{i=1}^n\alpha_ix_i$,
then \(G\) is a constant sign or a signed literal.
\end{lemma}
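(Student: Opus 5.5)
The plan is to combine $G^2=1$ with Fourier uniqueness. Since $G$ takes values in $\{+1,-1\}$, we have $G(x)^2=1$ for every $x$. Squaring the affine expression and using $x_i^2=1$ gives
\[
  1=G(x)^2=\alpha_0^2+\sum_{i=1}^n\alpha_i^2
   +2\sum_{i=1}^n\alpha_0\alpha_i x_i
   +2\sum_{1\le i<j\le n}\alpha_i\alpha_j x_ix_j .
\]
Here the $\alpha_i$ are real, because they are Fourier coefficients of a real-valued function by \eqref{eq:fourier-expansion}.

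By uniqueness of the Fourier expansion, each nonconstant coefficient must vanish:
\begin{itemize}
  \item $\alpha_i\alpha_j=0$ for all $1\le i<j\le n$;
  \item $\alpha_0\alpha_i=0$ for all $i\ge1$;
  \item $\sum_{i=0}^n\alpha_i^2=1$.
\end{itemize}
So among $\alpha_0,\alpha_1,\ldots,\alpha_n$ at most one is nonzero, and its square equals $1$. This is the same computation as in the proof of Lemma~\ref{lem:degree-one-state}, now in the scalar case where "orthogonal" reals must have disjoint support.

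It remains to read off the two cases:
\begin{itemize}
  \item If the nonzero coefficient is $\alpha_0=\tau\in\{+1,-1\}$, then $G\equiv\tau$ is a constant sign.
  \item If it is $\alpha_r=\tau$ for some $r\ge1$, then $G(x)=\tau x_r$ is a signed literal.
\end{itemize}

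There is no real obstacle. The only point needing care is to state explicitly that the coefficients are real, so that a vanishing product forces one factor to be zero. If instead the $\alpha_i$ were merely given as scalars, we would first note that they are determined uniquely by \eqref{eq:fourier-expansion} from a real function, and hence are real.
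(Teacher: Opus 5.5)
Your proof is correct and follows the paper's argument: expand $G(x)^2=1$, use Fourier uniqueness to get $\alpha_0\alpha_i=0$, $\alpha_i\alpha_j=0$ and $\sum_{i=0}^n\alpha_i^2=1$, and conclude that exactly one coefficient is nonzero and equals $\pm1$. Your caveat about the coefficients being real is harmless but not needed: a vanishing product forces a zero factor over $\mathbb{C}$ too, and $\alpha^2=1$ already forces $\alpha=\pm1$ there.
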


\begin{proof}
The pointwise identity \(G(x)^2=1\) expands as
\begin{align*}
    1
    &=\alpha_0^2+\sum_{i=1}^n\alpha_i^2
      +2\sum_{i=1}^n\alpha_0\alpha_i x_i
      +2\sum_{1\leq i<j\leq n}\alpha_i\alpha_jx_ix_j.
\end{align*}
Comparing Fourier coefficients, by the uniqueness of Fourier expansion,
\[
    \alpha_0\alpha_i=0,
    \qquad
    \alpha_i\alpha_j=0\quad(i\neq j),
    \qquad
    \alpha_0^2+\sum_{i=1}^n\alpha_i^2=1.
\]
Hence, exactly one coefficient is nonzero, and it equals \(+1\) or \(-1\).
\end{proof}

\begin{definition}
\label{def:short-halfcube}
A \emph{short halfcube} in $\{+1,-1\}^m$ is a set of one of the forms
\[
    H=\{x:x_r=\tau\}\quad\text{or}\quad H=\{x:x_r=\tau x_s\},
\]
where $r\neq s$ and $\tau\in\{+1,-1\}$.  Equivalently,
$H=\{x:\chi_B(x)=\tau\}$ for a set $B$ of one or two coordinates.
Associated with $H$, choose the affine form
\begin{equation}
    \ell_H(x)
    :=\begin{cases}
       \dfrac{1+\tau x_r}{2},
          &H=\{x:x_r=\tau\},\\[7pt]
       \dfrac{x_r+\tau x_s}{2},
          &H=\{x:x_r=\tau x_s\}.
    \end{cases}
    \label{eq:halfcube-form}
\end{equation}
Then $\ell_H=0$ on $H^c$ and
$|\ell_H|=1$ on $H$.
\end{definition}

For example, take $H=\{x\in\{+1,-1\}^m:x_1=-x_2\}$.  Eliminating
$x_2$ by the relation $x_2=-x_1$ parametrizes $H$ by the ordinary cube
$\{+1,-1\}^{m-1}$ on the free coordinates $(x_1,x_3,\ldots,x_m)$.
The corresponding bijection is
    $(x_1,x_3,\ldots,x_m)\mapsto(x_1,-x_1,x_3,\ldots,x_m)$.

\begin{lemma}
\label{lem:affine-support}
Let $f:\{+1,-1\}^n\to V$, $f(x)=v_0+\sum_{i=1}^n x_iv_i$, be a
nonzero affine vector-valued function, where $V$ is a real or complex
inner-product space.  Then,
\[
    |\supp(f)|\geq2^{n-1}.
\]
If equality holds, then $\supp(f)$ is a short halfcube.
\end{lemma}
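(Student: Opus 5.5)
Since $V$ is only an inner-product space, I would first reduce to the scalar case. Pick any $w\in V$ and form the real-valued affine functions $g_w(x):=\operatorname{Re}\ip{w}{f(x)}$, which have the same affine shape. If $f$ is nonzero, some coefficient $v_k\neq0$. Then there is a $w$ with $\operatorname{Re}\ip{w}{v_k}\neq0$, so $g_w$ is a nonzero real affine function. Also $\supp(g_w)\subseteq\supp(f)$. So it suffices to prove $|\supp(g)|\ge2^{n-1}$ for a nonzero real affine $g=\alpha_0+\sum_i\alpha_ix_i$.

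For the equality case, note that equality for $f$ forces $\supp(g_w)=\supp(f)$ whenever $g_w$ attains the bound. Otherwise $g_w$ would have strictly smaller support than $2^{n-1}$, which contradicts the scalar bound. So equality also reduces to the scalar case.

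For the scalar bound, I would argue by pairing. Suppose first that some $\alpha_r\neq0$ with $r\ge1$. Pair each $x$ with $x^{\oplus r}$. Then
\[
g(x)-g(x^{\oplus r})=2\alpha_rx_r\neq0,
\]
so each of the $2^{n-1}$ pairs has at least one point in the support. This gives $|\supp(g)|\ge2^{n-1}$. If instead all $\alpha_i=0$ for $i\ge1$, then $g=\alpha_0\neq0$ and the support is the whole cube.

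Equality means exactly one point in each $r$-pair is a zero, for every $r$ with $\alpha_r\neq0$. Equivalently, the zero set $Z$ also has size $2^{n-1}$, and $Z$ is disjoint from $Z^{\oplus r}$ for every such $r$.

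The main obstacle is the step from "exactly half zeros" to "short halfcube". I would handle it via the indicator $\1_{\supp}=1-\1_Z$, choosing a polynomial identity that captures it. The first thing I would try is the following. Since $|Z|=2^{n-1}$, the mean of $\1_Z$ is $1/2$. Because $Z$ meets each $r$-edge exactly once for every $r$ in $A:=\{r\ge1:\alpha_r\neq0\}$, we get $\1_Z(x^{\oplus r})=1-\1_Z(x)$. Hence the function $s:=2\1_Z-1$ satisfies $s(x^{\oplus r})=-s(x)$ for all $r\in A$, and $s$ is $\pm1$-valued.

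Next I would use the equation $g\cdot\1_Z=0$ together with the affine form of $g$ to pin down $s$. Fix a particular $r\in A$. On each $r$-edge, $g$ vanishes at the $Z$-endpoint. So for $x\in Z$, $g(x^{\oplus r})=-2\alpha_rx_r$. This means $g(x^{\oplus r})$ takes only the two values $\pm2\alpha_r$ on $\supp(g)$. Thus $|g|$ is constant on the support.

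Now apply the same reasoning with a second index $r'\in A$. This gives $|\alpha_r|=|\alpha_{r'}|$ for all $r,r'\in A$. Moreover, $g$ takes only the values $0$ and $\pm2|\alpha_r|$.

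Finally, consider $g/(2|\alpha_r|)$, an affine function with values in $\{0,\pm1\}$, and invoke the argument of Lemma~\ref{lem:affine-sign}. Squaring the constraint $g^2(g^2-4\alpha_r^2)=0$ and comparing top-degree Fourier coefficients forces $|A|\le2$. In the case $|A|=2$ it also forces $\alpha_0=0$. In the case $|A|=1$ it forces $|\alpha_0|=|\alpha_r|$.

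These cases give the zero sets $\{x_r=-\tau\}$ and $\{x_r=-\tau x_s\}$ respectively. Their complements are exactly the short halfcubes of Definition~\ref{def:short-halfcube}. The degree comparison in this last step is where I expect the real work to lie: the quartic $g^2(g^2-4\alpha_r^2)$ has degree-four coefficients proportional to $\alpha_a\alpha_b\alpha_c\alpha_d$ for distinct indices, which vanish only if $|A|\le3$. The cases $|A|=3$ and the role of $\alpha_0$ then need a finite check of lower-degree coefficients.
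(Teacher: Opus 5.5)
Your reduction to a nonzero real affine $g_w$, the inclusion $\supp(g_w)\subseteq\supp(f)$, the pairing bound, and the transfer of the equality case to $g_w$ are all correct. This part is essentially the paper's argument; the paper takes $\lambda=\operatorname{Re}\ip{v_1}{\cdot}$ and pairs along a coordinate with nonzero coefficient.

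The gap is in the equality case, at exactly the step you flag. After showing $g\in\{0,\pm2\alpha\}$ with $\alpha=|\alpha_r|$ for all $r\in A$, you pass to the squared identity $g^2(g^2-4\alpha^2)=0$. Its degree-four coefficients only give $|A|\le3$. Excluding $|A|=3$ and determining $\alpha_0$ are left as an unperformed ``finite check''. So the claim that the coefficient comparison ``forces $|A|\le2$'', with $\alpha_0=0$ or $|\alpha_0|=|\alpha_r|$, is not established. That is precisely the conclusion that identifies the support as a short halfcube.

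The repair is short, in any of three ways.
\begin{itemize}
\item Squaring threw information away. The cubic identity $g^3-4\alpha^2g=0$ already has coefficient $6\alpha_a\alpha_b\alpha_c$ on $x_ax_bx_c$ for distinct $a,b,c$, which forces $|A|\le2$. For $A=\{a,b\}$, its $x_ax_b$-coefficient $6\alpha_0\alpha_a\alpha_b$ forces $\alpha_0=0$.
\item Alternatively, negate the coordinates with $\alpha_r<0$, so that $g=\alpha_0+\alpha\sum_{r\in A}x_r$. This takes $|A|+1$ distinct values, all inside the three-point set $\{0,\pm2\alpha\}$, so $|A|\le2$.
\item Better still, use the observation you already made, that each $r$-edge contains exactly one zero. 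Write $g=a(x')+\alpha_rx_r$ with $a$ independent of $x_r$. The observation says $a(x')\in\{\pm\alpha_r\}$ for every $x'$, so $a/\alpha_r$ is an affine sign function. Lemma~\ref{lem:affine-sign} makes it $\tau$ or $\tau x_s$. Then $g=\alpha_r(a/\alpha_r+x_r)$ has support $\{x_r=\tau\}$ or $\{x_r=\tau x_s\}$.
\end{itemize}
The third is the paper's route. It makes the constancy of $|g|$ on the support, the equality of the $|\alpha_r|$, and all Fourier bookkeeping unnecessary.

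A minor point: your final ``respectively'' pairs the two cases with their zero sets in the wrong order.
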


\begin{proof}
If $v_i=0$ for every $i\in[n]$, then $f$ is a nonzero constant and
its support is the full cube.  Otherwise, after relabeling coordinates,
assume $v_1\neq0$.

Define 
$\lambda(w):=\operatorname{Re}\ip{v_1}{w}$ for $w\in V$. 
Then,  $c:=\lambda(v_1)=\norm{v_1}^2>0$.
Consider the function 
\[
    g(x) :=\lambda(f(x))=a(x_2,\ldots,x_n)+cx_1,
    \qquad
    a(x_2,\ldots,x_n):=\lambda(v_0)+\sum_{i=2}^n x_i\lambda(v_i).
\]
For every fixed $x'=(x_2,\ldots,x_n)$, the two values $g(+1,x')=a(x')+c$ and $g(-1,x')=a(x')-c$ cannot both be zero, as otherwise one would have $c = 0$, a contradiction. This shows 
    $|\supp(g)| \geq2^{n-1}$.
Since $g(x)=\lambda(f(x))$, we have
$\supp(g)\subseteq\supp(f)$, and hence $|\supp(f)|\geq2^{n-1}$.

Now suppose that $|\supp(f)|=2^{n-1}$, the analysis above implies 
    $|\supp(g)|=2^{n-1}$, 
and exactly one of
$a(x')+c$ and $a(x')-c$ vanishes.  Equivalently,
$G(x'):=a(x')/c$ lies in $\{+1,-1\}$ for every
$x'\in\{+1,-1\}^{n-1}$.  The function $G$ is affine and sign-valued.
By Lemma~\ref{lem:affine-sign}, for some $\tau\in\{+1,-1\}$, either
$G(x')=\tau$ or $G(x')=\tau x_r$ for some $r\in\{2,\ldots,n\}$.
Since $g(x)=c(G(x')+x_1)$, the two cases give, respectively,
\[
    \supp(f)=\supp(g)=\{x:x_1=\tau\}
    \quad\text{or}\quad
    \supp(f)=\supp(g)=\{x:x_1=\tau x_r\}.
\]
Both are short halfcubes.
\end{proof}

\begin{lemma}
\label{lem:halfcube-factor}
Let $H\subseteq\{+1,-1\}^m$ be a short halfcube.  Suppose that
$f(x,y)$ is a scalar- or vector-valued multilinear function of
degree at most two which vanishes whenever $x\notin H$, for every value
of the additional Boolean variables $y$.  Then
\[
    f(x,y)=\ell_H(x)\,h(x,y).
\]
Here $h$ is scalar-valued when $f$ is scalar-valued and vector-valued
when $f$ is vector-valued, and $h$ has degree at most one on the full
Boolean cube.
\end{lemma}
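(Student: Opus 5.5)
We prove the statement by a change of variables that makes $\ell_H$ a single coordinate, so that divisibility can be read off directly from the Fourier expansion. We treat the two shapes of $H$ separately.

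\emph{Case $H=\{x:x_r=\tau\}$.} Here $\ell_H=(1+\tau x_r)/2$. Write $f(x,y)=A(x',y)+x_rB(x',y)$, where $x'$ denotes the remaining $x$-coordinates. By multilinearity $\deg A\le2$ and $\deg B\le1$. The vanishing hypothesis at $x_r=-\tau$ gives $A-\tau B\equiv0$ as a function of $(x',y)$. By Fourier uniqueness, $A=\tau B$, so
\[
  f=\tau B+x_rB=(1+\tau x_r)\,\tau B=\ell_H\cdot 2\tau B .
\]
Thus we may take $h:=2\tau B$, which has degree at most one and is vector-valued exactly when $f$ is.

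\emph{Case $H=\{x:x_r=\tau x_s\}$.} Here $\ell_H=(x_r+\tau x_s)/2$. Expand $f$ in the variables $x_r,x_s$:
\[
  f=A+x_rB+x_sC+x_rx_sD,
\]
where $A,B,C,D$ are functions of the other variables $z$. Degree considerations give $\deg A\le2$, $\deg B,\deg C\le1$, and $D$ constant. On $H^c$ we have $x_s=-\tau x_r$, and since $x_r^2=1$ the hypothesis becomes
\[
  A-\tau D+x_r(B-\tau C)\equiv0
\]
for all $x_r$ and $z$. Fourier uniqueness then yields $A=\tau D$ and $B=\tau C$. Substituting back,
\[
  f=\tau D+\tau x_rC+x_sC+x_rx_sD=(x_r+\tau x_s)\,(\tau x_rD+C)\cdot\bigl(\text{check}\bigr).
\]
Verifying this factorization: $(x_r+\tau x_s)(C+\tau x_rD)=x_rC+\tau D+\tau x_sC+x_rx_sD$, using $x_r^2=1$ and $\tau^2=1$. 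This does not match $f$ as it stands, because the $C$ terms appear with $\tau$ on the wrong factor. Since $B=\tau C$ is equivalent to $C=\tau B$, rewrite $f=\tau D+x_rB+\tau x_sB+x_rx_sD$. Now
\[
  (x_r+\tau x_s)(B+\tau x_rD)=x_rB+\tau D+\tau x_sB+x_rx_sD=f .
\]
Hence we set $h:=2(B+\tau x_rD)$. Since $\deg B\le1$ and $D$ is constant, $\deg h\le1$.

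Both cases are elementary. The only steps needing care are the algebra with $\tau$, as the check above shows, and the justification that vanishing "for every value of $y$ and every $x\notin H$" gives an identity of multilinear polynomials in the remaining variables. That justification is exactly Fourier uniqueness applied on the subcube parametrizing $H^c$, where the substitution $x_s=-\tau x_r$ keeps the expression multilinear because the reduction $x_r^2=1$ is used.
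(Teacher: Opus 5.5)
Your proof is correct and follows the paper's argument essentially verbatim: you use the same coefficient expansion in $x_r$ (resp. $x_r,x_s$) and derive the same vanishing relations $A=\tau B$ (resp. $A=\tau D$, $B=\tau C$). Your final $h=2(B+\tau x_rD)$ is the paper's $2\tau(C+Dx_r)$ rewritten using $B=\tau C$.

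In a clean write-up, drop the displayed false start with the ``check'' factor, which fails when $\tau=-1$. Also drop the opening remark about a change of variables, since the argument never uses one.
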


\begin{proof}
First suppose $H=\{x:x_r=\tau\}$.  Write $f=a+x_rb$, where $a,b$ do
not depend on $x_r$.  The degree bound gives $\deg(b)\leq1$.
Vanishing at $x_r=-\tau$ gives $a=\tau b$,
and therefore
\[
    f=\frac{1+\tau x_r}{2}(2\tau b)=\ell_H(2\tau b).
\]
Thus, one may take $h=2\tau b$, which has degree at most one on the
full cube.

Now suppose $H=\{x:x_rx_s=\tau\}$.  Write
$f=a+x_rb+x_sc+x_rx_sd$,
where the four coefficients are independent of $x_r,x_s$.  Here
$\deg(b),\deg(c)\leq1$, while $d$ is constant.  On $H^c$ one has
$x_s=-\tau x_r$.  Vanishing there gives $a=\tau d$ and $b=\tau c$.
Consequently,
\[
    f=\frac{x_r+\tau x_s}{2}\,2\tau(c+dx_r)
     =\ell_H\,2\tau(c+dx_r).
\]
Thus, one may take $h=2\tau(c+dx_r)$, which again has degree at most
one on the full cube.  
\end{proof}

\section{Proof of the main theorem}
\label{sec:main-proof}

We now prove Theorem~\ref{thm:main}, i.e., its upper bound.  The upper bound has two steps.
First, we show that any hypothetical counterexample has at least three
relevant coordinates outside the first-query support.  Second, we analyze
the one-query branches obtained by fixing that support and prove that
three such outside coordinates force the function to depend on at most
six variables. 

\subsection{Step 1: Bounding the first-query support}
\label{sec:first-support}

We now obtain information that ordinary degree alone does not provide:
each coordinate in the first-query support lowers the degree of a toggled
copy of the output function.

The key move is to apply the second oracle $O_x^P$ once more after the
exactness identity.  Since $(O_x^P)^2=I$, this \emph{unqueries} the
second query on the right-hand side and exposes the first-query state.

\begin{lemma}
\label{lem:first-query-degree}
For the exact two-query representation
$O_x^PO_x^Q\ket\phi$, every coordinate $j\in S$ satisfies
\[
    \deg(x_jF)\leq3,
\]
where $(x_jF)(x):=x_jF(x)$.
\end{lemma}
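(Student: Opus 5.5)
Starting from the exactness identity \eqref{eq:exactness}, I apply $O_x^P$ to both sides. Since $(O_x^P)^2=I$, this gives
\[
    O_x^P\mathsf M O_x^P\,O_x^Q\ket\phi
    =F(x)\,O_x^Q\ket\phi
    \qquad(x\in\{+1,-1\}^n).
\]
Put $\ket{\varphi_1(x)}=O_x^Q\ket\phi=\sum_j x_jQ_j\ket\phi$, which has degree one. The operator $O_x^P\mathsf M O_x^P=\sum_{i,i'}x_ix_{i'}P_i\mathsf MP_{i'}$ has degree at most two in $x$. Hence $F(x)\ket{\varphi_1(x)}$ equals a vector-valued function of degree at most three.

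Now fix $j\in S$, so $Q_j\ket\phi\neq0$, and project onto the $Q_j$-sector. Since $Q_jQ_{j'}=\delta_{jj'}Q_j$, we get
\[
    Q_jF(x)\ket{\varphi_1(x)}=F(x)\,x_j\,Q_j\ket\phi.
\]
The left side also equals $Q_j\,O_x^P\mathsf MO_x^P\ket{\varphi_1(x)}$, which has degree at most three. So the vector-valued function $x\mapsto (x_jF(x))\,u$, with fixed nonzero $u:=Q_j\ket\phi$, has degree at most three.

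Next I take an inner product with the fixed vector $u$:
\[
    \ip{u}{(x_jF(x))\,u}=\norm{u}^2\,x_jF(x).
\]
Taking the inner product with a fixed vector is linear and commutes with Fourier expansion, so it cannot raise degree. Therefore $\norm{u}^2\,x_jF$ has degree at most three. Dividing by $\norm{u}^2=q_j>0$ gives $\deg(x_jF)\le3$.

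The one step that needs care is the degree bookkeeping in the first display. The product $O_x^P\mathsf MO_x^P\ket{\varphi_1(x)}$ is multilinear once we reduce $x_i^2=1$. Reducing squares only lowers degree, so the bound "at most $2+1=3$" survives the multilinearization. Fourier uniqueness for vector-valued functions then lets the degree of the projected expression be read off coefficientwise. No orthogonality between the path vectors $v_{ij}$ is needed here, which is why this route avoids the interference issue.
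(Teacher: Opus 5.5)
Your proposal is correct and follows essentially the same argument as the paper: apply $O_x^P$ to the exactness identity, use $(O_x^P)^2=I$, and pair the resulting degree-at-most-three identity with $\omega_j=Q_j\ket\phi$ to isolate $q_j\,x_jF(x)$. Your intermediate projection by $Q_j$ is redundant, since pairing with $Q_j\ket\phi$ already kills the other sectors, but it is harmless.
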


\begin{proof}
Recall that exactness says
\[
    \mathsf M O_x^PO_x^Q\ket\phi
    =F(x)O_x^PO_x^Q\ket\phi.
\]
Multiplying on the left by $O_x^P$ and using $(O_x^P)^2=I$ gives
\begin{equation}
    O_x^P\mathsf M O_x^PO_x^Q\ket\phi
    =F(x)O_x^Q\ket\phi.
    \label{eq:transported-exactness}
\end{equation}
The left-hand side is a vector-valued polynomial of degree at most three.
Put $\omega_j:=Q_j\ket\phi$.  The vectors
$\omega_0,\ldots,\omega_n$ are mutually orthogonal and
$\norm{\omega_j}^2=q_j$.  Taking the inner product of
\eqref{eq:transported-exactness} with $\omega_j$, its right-hand side becomes
\[
    F(x)\ip{\omega_j}{\sum_{\ell=0}^n x_\ell\omega_\ell}
    =q_jx_jF(x).
\]
The left-hand side remains a scalar polynomial of degree at most three.
Since $j\in S$, one has $q_j>0$, and division by $q_j$ proves the
claim.
\end{proof}

For later use, we recall the coefficient form of the Fourier toggle:
\begin{equation}
    x_jF(x)
    =\sum_{A\subseteq[n]}\widehat F(A)
       \chi_{A\mathbin\triangle\{j\}}(x),
    \qquad
    \widehat{x_jF}(B)
    =\widehat F(B\mathbin\triangle\{j\}).
    \label{eq:F-Fourier-toggle}
\end{equation}

\begin{proposition}
\label{prop:support-four}
If $Q_E(F) \le 2$ and
$\rv(F)\geq5$, then every $A$ satisfying $\widehat F(A)\neq0$ and 
$|A|\geq3$ contains the first-query support $S$.  In particular,
$|S|\leq4$.
\end{proposition}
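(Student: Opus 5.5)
The plan is to combine Lemma~\ref{lem:first-query-degree} with the Fourier toggle identity \eqref{eq:F-Fourier-toggle}, and then to use a known degree bound to guarantee that a large Fourier set exists at all.

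\textbf{First claim.} Fix $A$ with $\widehat F(A)\neq0$ and $|A|\geq3$, and fix $j\in S$. Suppose, for contradiction, that $j\notin A$. Put $B:=A\mathbin\triangle\{j\}=A\cup\{j\}$, so $|B|=|A|+1\geq4$. By \eqref{eq:F-Fourier-toggle},
\[
    \widehat{x_jF}(B)=\widehat F(B\mathbin\triangle\{j\})=\widehat F(A)\neq0,
\]
and hence $\deg(x_jF)\geq4$. Since $j\in S$, Lemma~\ref{lem:first-query-degree} gives $\deg(x_jF)\leq3$. This contradiction shows $j\in A$. As $j\in S$ was arbitrary, $S\subseteq A$.

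\textbf{Second claim.} To deduce $|S|\leq4$, I need at least one $A$ with $\widehat F(A)\neq0$ and $3\leq|A|\leq4$.
\begin{itemize}
\item The upper bound $|A|\leq4$ holds for every $A$ in the Fourier support, because $\deg(F)\leq2\QE(F)\leq4$ by \eqref{eq:polynomial-method}.
\item For the lower bound, use the hypothesis $\rv(F)\geq5$. If $\deg(F)\leq2$, the Nisan--Szegedy bound $\rv(F)\leq d2^{d-1}$ would give $\rv(F)\leq 2\cdot 2=4$, a contradiction. So $d:=\deg(F)\geq3$.
\end{itemize}
Choose any $A$ with $|A|=d$ and $\widehat F(A)\neq0$. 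Then $3\leq|A|\leq4$, and the first claim gives $S\subseteq A$. Therefore $|S|\leq|A|\leq4$.

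\textbf{Expected difficulty.} I do not expect a real obstacle. The substantive input is already contained in Lemma~\ref{lem:first-query-degree}, which is the unquerying trick. The only point needing care is the existence of a Fourier set of size at least three, and this is exactly where the hypothesis $\rv(F)\geq5$ is used, through the Nisan--Szegedy bound cited in the introduction.
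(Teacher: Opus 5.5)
Your proposal is correct and follows the paper's proof essentially step for step: the toggle identity combined with $\deg(x_jF)\le 3$ forces $j\in A$, and the Nisan--Szegedy bound together with $\deg(F)\le 4$ then yields $|S|\le 4$. Your explicit choice of a top-degree set $A$ with $3\le|A|\le 4$ simply spells out the existence step that the paper leaves implicit.
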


\begin{proof}
The Nisan--Szegedy bound~\cite{NisanSzegedy1994}
$\rv(F)\leq d2^{d-1}$, where $d=\deg(F)$, implies that
$\deg(F)\leq2$ entails $\rv(F)\leq2\cdot2^{2-1}=4$.  Hence,
$\rv(F)\geq5$ gives $\deg(F)\geq3$.  Fix $A\subseteq[n]$ with
$\widehat F(A)\neq0$ and $|A|\geq3$, and fix $j\in S$.
Lemma~\ref{lem:first-query-degree} and
\eqref{eq:F-Fourier-toggle} give
\[
    \widehat{x_jF}(A\mathbin\triangle\{j\})
    =\widehat F(A)\neq0,\qquad |A\mathbin\triangle\{j\}|\leq3.
\]
If $j\notin A$, then $|A\mathbin\triangle\{j\}|=|A|+1\geq4$,
contradicting $|A\mathbin\triangle\{j\}|\leq3$.  Hence $j\in A$,
and therefore $S\subseteq A$.

Finally, $\deg(F)\leq 2Q_E(F) \le 4$. Hence, every $A$ satisfying $\widehat F(A)\neq0$ has size $|A| \le 4$. So, $|S|\leq 4$.
\end{proof}

Thus, in the regime $\rv(F)\geq5$, at most four physical coordinates
have positive first-query weight.  This gives the starting point for the
next step.

\begin{corollary}
\label{cor:three-outside}
If $Q_E(F) \le 2$  and
$\rv(F)\geq7$, then at least three relevant coordinates lie outside
the first-query support.
\end{corollary}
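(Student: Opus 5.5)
Since $\rv(F)\geq7\geq5$, Proposition~\ref{prop:support-four} gives $|S|\leq4$. This bound alone leaves only $\rv(F)-|S|\geq3$ relevant coordinates outside $S$ when $|S|=4$. So the plan is to use the stronger containment statement: every $A$ with $\widehat F(A)\neq0$ and $|A|\geq3$ satisfies $S\subseteq A$. We then argue that $|S|\leq4$ leaves enough relevant coordinates outside $S$. I would split into cases according to $|S|$, and the only delicate case is $|S|=4$.

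\textbf{Case $|S|\leq 3$.} Here the relevant coordinates outside $S$ number at least $\rv(F)-|S|\geq 7-3=4\geq 3$, which settles it. The same count handles $|S|\leq 4$ whenever $\rv(F)\geq 8$.

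\textbf{Case $|S|=4$ and $\rv(F)=7$.} Then exactly three relevant coordinates would lie outside $S$, and that is still at least three. In fact the count $\rv(F)-|S|\geq 7-4=3$ already gives the claim, so no case distinction is truly needed: at least $\rv(F)-|S|$ relevant coordinates lie outside $S$, whether or not every element of $S$ is itself relevant.

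I therefore expect the proof to be a one-line deduction from Proposition~\ref{prop:support-four}:
\[
  |\Rel(F)\setminus S|\;\geq\;\rv(F)-|S|\;\geq\;7-4\;=\;3.
\]
The only step requiring care is confirming that the hypotheses of Proposition~\ref{prop:support-four} hold, namely $\QE(F)\le2$ and $\rv(F)\geq5$. Both follow immediately from $\rv(F)\geq7$, so I expect no real obstacle.
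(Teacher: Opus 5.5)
Your final deduction $|\Rel(F)\setminus S|\geq\rv(F)-|S|\geq 7-4=3$, using $|S|\leq4$ from Proposition~\ref{prop:support-four}, is correct and is exactly the paper's proof; the containment statement and the case split you set up at the start are not needed. One small wording slip: $\QE(F)\le2$ is a hypothesis of the corollary, not a consequence of $\rv(F)\ge7$; only $\rv(F)\ge5$ follows from $\rv(F)\ge7$.
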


\begin{proof}
Proposition~\ref{prop:support-four} gives $|S|\leq4$.  Therefore,
$|\Rel(F)\setminus S|\geq\rv(F)-|S|\geq7-4=3$.
\end{proof}

\subsection{Step 2: The structure of two-query algorithms}
\label{sec:two-query-structure}

We now prove the complementary structural statement that supplies the
second step of the argument.

\begin{theorem}
\label{thm:three-outside}
Suppose that an exact two-query algorithm for $F$ has at least three
relevant coordinates outside its first-query support $S$.  Then,
    $\rv(F)\leq6$.
\end{theorem}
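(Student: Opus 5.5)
Write $x=(s,y)$ with $s\in\{+1,-1\}^S$ the coordinates in the first-query support and $y$ the remaining coordinates. For $j\notin S$ we have $Q_j\ket\phi=0$, so the first-query state
\[
  \varphi_1(s)=\sum_{j\in\{0\}\cup S}s_j\omega_j
\]
depends only on $s$. Fixing $s$ therefore gives a second-query state
\[
  \varphi_2(s,y)=O^P_{(s,y)}\varphi_1(s)=\sum_{i\in\{0\}\cup S}s_iP_i\varphi_1(s)+\sum_{i\notin S}y_iP_i\varphi_1(s).
\]
As a function of $y$, this is a one-query state with orthogonal sector vectors $P_i\varphi_1(s)$. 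Consequently the branch function $F_s(y):=F(s,y)$ is computed exactly by a genuine one-query algorithm. By Lemma~\ref{lem:one-query}, each $F_s$ is a constant sign, a signed literal $\tau y_r$, or a signed parity $\tau y_ry_s$. Moreover, every coordinate relevant to the branch carries weight exactly $\tfrac12$:
\[
  \norm{P_r\varphi_1(s)}^2=\tfrac12 .
\]
The plan is to exploit this, first the weights and then the degree, to control how the branch types vary with $s$.

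\textbf{Bounding the outside coordinates.} For $r\notin S$, the map
\[
  g_r(s):=P_r\varphi_1(s)=\sum_j s_jP_r\omega_j
\]
is an affine vector-valued function of $s$, and $\sum_{r\notin S}\norm{g_r(s)}^2\le 1$ for each $s$. Every outside relevant coordinate $r$ is relevant to some branch, so $g_r\neq0$. By Lemma~\ref{lem:affine-support}, $g_r$ is nonzero on at least half of $\{+1,-1\}^S$. A pointwise count then bounds the number of outside relevant coordinates:
\[
  \sum_{s}\#\{r\notin S:\ r\text{ relevant to }F_s\}\le 2\cdot 2^{|S|},
\]
while each such $r$ is relevant to $F_s$ on a set of $s$ containing... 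Here a subtlety appears: $g_r(s)\neq0$ does not by itself make $r$ relevant to $F_s$, but the weight is $\tfrac12$ exactly when $r$ is relevant. I would handle this with a constant-norm argument. Since $\norm{g_r(s)}^2$ is a degree-$\le2$ polynomial in $s$ taking values in $[0,1]$, and it equals $\tfrac12$ wherever $r$ is relevant, I would combine the degree bound with support considerations to show that the relevance set $C_r$ of each outside relevant $r$ has size at least $2^{|S|-1}$. Granting this, the count allows at most $4$ outside relevant coordinates. Equality forces every branch to be a parity, and forces the sets $C_r$ to be halves whose supports are short halfcubes by the equality case of Lemma~\ref{lem:affine-support}.

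\textbf{Routing structure.} With at least three outside relevant coordinates, no single parity pair can cover all branches. Hence there are at least two distinct branch types, and the cells $C=\{s: F_s\text{ has a given type}\}$ must partition the cube. Using the $\tfrac12$-mass condition together with the affine structure, I would show there are exactly two types, with complementary short-halfcube cells $H$ and $H^c$. This gives
\[
  F(s,y)=\ell'_H(s)\,\tau_1 y_ay_b+\ell'_{H^c}(s)\,\tau_2 y_cy_d,
\]
up to the analogous form with literals. Here $\ell'_H$ denotes the indicator of $H$, which equals $(1+\chi_B(s)\tau)/2$ with $|B|\le2$. This step yields exactly $|B|+4\le 6$ relevant coordinates, provided no other coordinate of $S$ is relevant. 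To rule that out, I would use Lemma~\ref{lem:first-query-degree}: every $j\in S$ satisfies $\deg(x_jF)\le3$. The top-degree terms of the displayed formula are $\tfrac{\tau}{2}\chi_B(s)\,y_ay_b$ and its partner, of degree $|B|+2$. If some $j\in S\setminus B$ existed, multiplying by $x_j$ would produce a character of degree $|B|+3$. I expect this to force $|B|=2$ with $S=B$, so that $\rv(F)\le 2+2+2=6$. Lemma~\ref{lem:halfcube-factor} would be used to factor the branch-mass polynomials through $\ell_H$ while identifying the cells.

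\textbf{Main obstacle.} I expect the hardest step to be proving that the cells are complementary short halfcubes. The branch coefficient vectors $P_i\omega_j$ interfere across $j$, so the cells are not obviously affine-defined. My first attempt would be to apply Lemma~\ref{lem:affine-support} to $g_r$ and to the differences $g_r-g_{r'}$, using the squared-norm identity $\norm{g_r(s)}^2\in\{0,\tfrac12\}$ on cells.
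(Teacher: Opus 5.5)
\textbf{There is a genuine gap, at the step that carries the theorem.} You only assert (``Granting this'', ``I would show'') that every relevance set $C_r$ has size at least $2^{|S|-1}$, and that there are exactly two types whose cells are complementary short halfcubes with no constant branches. The tools you list cannot deliver these claims.

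The identity $\norm{g_r(s)}^2\in\{0,\tfrac12\}$ holds only on \emph{nonconstant} branches. On a constant branch $F_s\equiv\tau$, the one-query state merely has to lie in $\mathcal H_\tau$, so $g_r(s)$ is unconstrained there. Thus $\supp(g_r)$ can be strictly larger than $C_r$. Lemma~\ref{lem:affine-support} applied to $g_r$, or to $g_r-g_{r'}$, therefore says nothing about the cells.

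\textbf{The missing idea is cross-branch orthogonality.} Suppose $T_s\neq T_{s'}$, with one of them possibly empty, and set $c(s):=\sum_{i\in\{0\}\cup S}s_iP_i\varphi_1(s)$. Then
\[
\ip{\varphi_2(s,y)}{\varphi_2(s',y')}=\ip{c(s)}{c(s')}+\sum_{i\notin S}y_iy_i'\ip{g_i(s)}{g_i(s')}
\]
depends only on $w=y\odot y'$. On the other hand, $F_s(y)F_{s'}(y\odot w)=\pm\chi_{T_s\triangle T_{s'}}(y)\chi_{T_{s'}}(w)$ equals $-1$ for a suitable $y$. Exactness therefore makes the Gram vanish for every $w$, and so $\ip{g_i(s)}{g_i(s')}=0$ for all $i\notin S$.

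Now, for $r\in T$, project $g_r$ onto $\Span\{g_r(s):s\in\mathcal C_T\}$. The result is affine, equals $g_r$ (of norm $1/\sqrt2$) on $\mathcal C_T$, and vanishes everywhere else, including on constant branches. Lemma~\ref{lem:affine-support} then gives $|\mathcal C_T|\ge 2^{|S|-1}$, with equality only for a short halfcube. Since the cells are disjoint, this yields at most two nonconstant types, complementary halfcube cells, and no constant branches. This is the paper's route; it also gives your per-coordinate bound, because $C_r\supseteq\mathcal C_T$ whenever $r\in T$.

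\textbf{Your closing formula has a second problem.} It silently assumes that the branch sign $\tau_s$ is constant on each cell. For a singleton type, the sign may depend on coordinates of $S$.

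The paper does the final count differently. It projects the final state onto the span of the states over $H$, factors out $\ell_H$ with Lemma~\ref{lem:halfcube-factor}, reparametrizes $H$ as a full cube, and applies Lemma~\ref{lem:degree-one-state}. This bounds the relevant coordinates of $S$ and of $Y$ together by two on each half, giving $|B|+2+2$.

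Your Lemma~\ref{lem:first-query-degree} route can work as an alternative for this last step, but only once the two-cell structure above is established. Write $F=A_1(s)\chi_{T_1}(y)+A_2(s)\chi_{T_2}(y)$, with each $A_k$ vanishing off its cell; the distinct $y$-characters keep the two components separate in the Fourier expansion. Since $|T_1\cup T_2|\ge 3$, some $|T_k|=2$, so $\deg(x_jA_k)\le 1$ for every $j\in S$. For $|S|\ge 3$ this forces $A_k$ to be constant. That is impossible, because $A_k$ is $\pm1$ on its own cell and vanishes on the other, nonempty cell. Hence $|S|\le 2$, and $\rv(F)\le 2+4=6$.
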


%\begin{remark}
%One can in fact prove the stronger statement that a single relevant coordinate outside the first-query support already implies $\rv(F)\leq6$.  That proof must analyze how one nonconstant branch type can coexist with constant branches and requires a finer case distinction. We omit that longer argument because it is not needed here.
%\end{remark}

\subsubsection{One-query branches to types}

Fix an assignment $\varepsilon=(\varepsilon_j)_{j\in S}
\in\{+1,-1\}^{S}$ to the coordinates in the first-query support, and
let $y=(y_i)_{i\in[n]\setminus S}\in\{+1,-1\}^{[n]\setminus S}$
denote the remaining signs.  We write $(\varepsilon,y)$ for the
corresponding full input.  Since $Q_j\ket\phi=0$ for $j\notin S$,
the state after the first query is
\begin{equation}
    \ket{\varphi_1(\varepsilon,y)}
    =Q_0\ket\phi+\sum_{j\in S}\varepsilon_jQ_j\ket\phi.
    \label{eq:fixed-first-state}
\end{equation}
Since  this is independent of $y$, by an abuse of notation, we denote
$\ket{\varphi_1(\varepsilon)}:=\ket{\varphi_1(\varepsilon,y)}$.

For every second-query label $i\in\{0,1,\ldots,n\}$, define the affine
vector-valued function
\begin{equation}
    s_i(\varepsilon):=P_i\ket{\varphi_1(\varepsilon)}
    =v_{i0}+\sum_{j\in S}\varepsilon_jv_{ij}.
    \label{eq:sector-map}
\end{equation}
We also define the vector-valued function
$c(\varepsilon):=s_0(\varepsilon)+\sum_{i\in S}\varepsilon_is_i(\varepsilon)$.
Writing $x=(\varepsilon,y)$ and expanding the sectors of the second
query gives
\begin{equation}
    \ket{\varphi_2(\varepsilon,y)}
    =O_x^P\ket{\varphi_1(\varepsilon)}
    =c(\varepsilon)+\sum_{i\in[n]\setminus S}y_is_i(\varepsilon).
    \label{eq:branch-decomposition}
\end{equation}

We record the cross-assignment sector orthogonality used below.  For any
$\varepsilon,\varepsilon'\in\{+1,-1\}^S$ and any distinct
$i,j\in\{0,1,\ldots,n\}$,
\begin{equation}
    \ip{s_i(\varepsilon)}{s_j(\varepsilon')}
    =\ip{\varphi_1(\varepsilon)}{P_iP_j\varphi_1(\varepsilon')}=0.
    \label{eq:cross-sector-orthogonality}
\end{equation}
Moreover, $c(\varepsilon)$ lies in the direct sum of the $P_0$-sector
and the $P_i$-sectors for $i\in S$.  Hence, for
$k\in[n]\setminus S$,
\begin{equation}
    \ip{c(\varepsilon)}{s_k(\varepsilon')}
    =\ip{s_0(\varepsilon)}{s_k(\varepsilon')}
      +\sum_{i\in S}\varepsilon_i
        \ip{s_i(\varepsilon)}{s_k(\varepsilon')}=0.
    \label{eq:c-s-orthogonality}
\end{equation}
These identities hold for both $\varepsilon = \varepsilon'$ and $\varepsilon \neq \varepsilon'$.  
In particular, normalization of the final state $\ket{\varphi_2(\varepsilon,y)}$ gives
\[
    \norm{c(\varepsilon)}^2+\sum_{i\in[n]\setminus S}\norm{s_i(\varepsilon)}^2=1.
\]

For fixed $\varepsilon$, equation~\eqref{eq:branch-decomposition} is
an exact one-query state in the variables $y$.  Indeed,
$c(\varepsilon)$ is orthogonal to every $s_i(\varepsilon)$ with
$i\notin S$, and these latter vectors are mutually orthogonal.  Thus,
$c(\varepsilon)$ serves as the dummy-sector component for the reduced
query.  The computed restricted function is
$F_\varepsilon(y):=F(\varepsilon,y)$.
Lemma~\ref{lem:one-query} therefore gives
\begin{equation}
    F_\varepsilon(y)
    =\tau_\varepsilon\chi_{T_\varepsilon}(y),
    \qquad
    |T_\varepsilon|\leq2,
    \label{eq:branch-type}
\end{equation}
where $\tau_\varepsilon:=F_\varepsilon(\1)\in\{+1,-1\}$
is its overall sign, and $T_\varepsilon\subseteq[n]\setminus S$ is
exactly the set of relevant coordinates of $F_\varepsilon$.  We call
$T_\varepsilon$ the \emph{type} of the branch.  The empty type means
that the branch is constant; a nonempty type is called
\emph{nonconstant}.  The weight conclusion in
Lemma~\ref{lem:one-query} also gives
\begin{equation}
    \norm{s_i(\varepsilon)}^2=\frac12, 
    \qquad \forall\ i\in T_\varepsilon.
    \label{eq:branch-half-weight}
\end{equation}

\subsubsection{At most two nonconstant types}

\begin{lemma}
\label{lem:type-orthogonality}
If $T_\varepsilon\neq T_{\varepsilon'}$, then
\[
    \ip{\varphi_2(\varepsilon,y)}
       {\varphi_2(\varepsilon',y')}=0
    \qquad\text{for all }y,y'.
\]
In particular,
\[
    \ip{s_i(\varepsilon)}{s_i(\varepsilon')}=0
    \qquad \forall\ i\in[n]\setminus S.
\]
This includes the case in which exactly one branch is constant.
\end{lemma}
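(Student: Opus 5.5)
The plan is to exploit the fact that the inner product between two final states in different branches depends on $y,y'$ only through the coordinatewise product $z:=y\odot y'$ (with $z_i=y_iy'_i$). We then choose the pair $(y,y')$ realizing a given $z$ cleverly so that exactness forces orthogonality.

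First, expand with \eqref{eq:branch-decomposition}:
\[
    \ip{\varphi_2(\varepsilon,y)}{\varphi_2(\varepsilon',y')}
    =\ip{c(\varepsilon)}{c(\varepsilon')}
     +\sum_{i\in[n]\setminus S}y_iy'_i\,\ip{s_i(\varepsilon)}{s_i(\varepsilon')}.
\]
The cross terms $\ip{c(\varepsilon)}{s_k(\varepsilon')}$, $\ip{s_k(\varepsilon)}{c(\varepsilon')}$ and $\ip{s_i(\varepsilon)}{s_j(\varepsilon')}$ with $i\ne j$ all vanish by \eqref{eq:cross-sector-orthogonality} and \eqref{eq:c-s-orthogonality}, which hold for $\varepsilon\neq\varepsilon'$ as well. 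Hence the inner product equals $I(z):=\alpha_0+\sum_{i\notin S}z_i\alpha_i$, an affine function of $z\in\{+1,-1\}^{[n]\setminus S}$, where $\alpha_0=\ip{c(\varepsilon)}{c(\varepsilon')}$ and $\alpha_i=\ip{s_i(\varepsilon)}{s_i(\varepsilon')}$.

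Next, since $T_\varepsilon\neq T_{\varepsilon'}$, the symmetric difference is nonempty. After possibly swapping the roles of $\varepsilon$ and $\varepsilon'$ (which only conjugates $I$), pick $k\in T_\varepsilon\setminus T_{\varepsilon'}$. This covers the case where one type is empty. Fix any $z$. For an arbitrary $y$, put $y':=y\odot z$, so that $y\odot y'=z$. Replacing $y$ by $y^{\oplus k}$ also replaces $y'$ by $y'^{\oplus k}$ and leaves $z$ unchanged. By \eqref{eq:branch-type}, $F_\varepsilon$ changes sign under this flip because $k\in T_\varepsilon$, while $F_{\varepsilon'}$ is unchanged because $k\notin T_{\varepsilon'}$. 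Thus for one of the two choices $y$ or $y^{\oplus k}$ we get $F(\varepsilon,y)\neq F(\varepsilon',y')$. By exactness \eqref{eq:exactness} the two final states then lie in orthogonal output spaces, so $I(z)=0$. Since $z$ was arbitrary, $I\equiv0$, and this gives the first claim for all $y,y'$.

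Finally, $I$ is an affine function of $z$ vanishing on the whole cube, so uniqueness of the Fourier expansion \eqref{eq:fourier-expansion} forces $\alpha_0=0$ and $\alpha_i=0$ for every $i\notin S$. This is exactly $\ip{s_i(\varepsilon)}{s_i(\varepsilon')}=0$. The only delicate point is the observation that the inner product depends only on $z$, which is what lets us trade a fixed $z$ for a convenient pair $(y,y')$. The rest is routine, so I expect no real obstacle beyond checking that all cross-sector terms vanish across distinct assignments, which \eqref{eq:cross-sector-orthogonality} and \eqref{eq:c-s-orthogonality} already provide.
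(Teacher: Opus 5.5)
Your proposal is correct and follows the paper's proof essentially step for step: the cross-sector identities reduce the inner product to an affine function of $y\odot y'$, exactness makes it vanish at a well-chosen pair $(y,y')$ for each fixed product, and Fourier uniqueness then forces $\ip{s_i(\varepsilon)}{s_i(\varepsilon')}=0$. Your flip of a single coordinate $k\in T_\varepsilon\setminus T_{\varepsilon'}$ is just a concrete way of witnessing the paper's observation that $\chi_{T_\varepsilon\mathbin\triangle T_{\varepsilon'}}$ takes both signs.
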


\begin{proof}
The cross-sector identities
\eqref{eq:cross-sector-orthogonality} and
\eqref{eq:c-s-orthogonality} give
\begin{equation}
    \kappa(y,y')
    :=\ip{\varphi_2(\varepsilon,y)}
            {\varphi_2(\varepsilon',y')}
    =\ip{c(\varepsilon)}{c(\varepsilon')}
      +\sum_{i\in[n]\setminus S}y_iy_i'
          \ip{s_i(\varepsilon)}{s_i(\varepsilon')}.
    \label{eq:cross-gram}
\end{equation}
Thus, the scalar-valued function $\kappa(y,y')$ depends only on the
coordinatewise product $w=y\odot y'$, where $w_i=y_iy_i'$.

Fix $w\in\{+1,-1\}^{[n]\setminus S}$ and put $y'=y\odot w$.
Equation~\eqref{eq:branch-type} gives
\[
    F_\varepsilon(y)F_{\varepsilon'}(y\odot w)
    =\tau_\varepsilon\tau_{\varepsilon'}
     \chi_{T_\varepsilon\mathbin\triangle T_{\varepsilon'}}(y)
     \chi_{T_{\varepsilon'}}(w).
\]
Since $T_\varepsilon\neq T_{\varepsilon'}$, their symmetric difference
is nonempty, and its character takes both signs.  We may therefore choose
$y$ so that $F_\varepsilon(y)F_{\varepsilon'}(y\odot w)=-1$.
For this choice, $F(\varepsilon,y)\neq F(\varepsilon',y\odot w)$.
Exactness then implies
\[
    \kappa(y,y\odot w)
    =\ip{\varphi_2(\varepsilon,y)}{\varphi_2(\varepsilon',y\odot w)}=0.
\]
The expression in \eqref{eq:cross-gram} depends only on $w$, not on the
chosen $y$.  It therefore vanishes for every $w$, so
$\kappa(y,y')=0$ for all $y,y'$.  Finally, as a function of $w$,
\eqref{eq:cross-gram} is
\[
    \ip{c(\varepsilon)}{c(\varepsilon')}
    +\sum_{i\in[n]\setminus S}w_i
       \ip{s_i(\varepsilon)}{s_i(\varepsilon')}.
\]
Fourier uniqueness forces the coefficient of every $w_i$ to vanish,
which proves the final assertion.
\end{proof}

For a nonconstant type $T\subseteq[n]\setminus S$, define its
\emph{type cell}
\begin{equation}
    \mathcal C_T
    :=\{\varepsilon\in\{+1,-1\}^{S}:T_\varepsilon=T\}
      \subseteq\{+1,-1\}^{S}.
    \label{eq:type-cell}
\end{equation}
Note that $\mathcal C_T \cap \mathcal C_{T'} = \emptyset$ if $T\neq T'$.
The density of $\mathcal C_T$ is $|\mathcal C_T|/2^{|S|}$.
We say that $T$ \emph{occurs} if $\mathcal C_T\neq\varnothing$.

\begin{lemma}
\label{lem:large-type-cells}
Every occurring nonconstant type has a type cell of density at least
$1/2$ in $\{+1,-1\}^{S}$.  Consequently, at most two nonconstant
types occur.  If two occur, their cells are complementary short halfcubes
and there are no constant branches.
\end{lemma}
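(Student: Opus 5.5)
Fix an occurring nonconstant type $T$ and pick $i\in T$. The plan is to study the affine vector-valued function $s_i(\varepsilon)=v_{i0}+\sum_{j\in S}\varepsilon_jv_{ij}$ on $\{+1,-1\}^S$ and to apply Lemma~\ref{lem:affine-support} to it. On $\mathcal C_T$ we have $\norm{s_i(\varepsilon)}^2=\frac12$ by \eqref{eq:branch-half-weight}. If $\varepsilon'\notin\mathcal C_T$, then $T_{\varepsilon'}\neq T$, and Lemma~\ref{lem:type-orthogonality} gives $\ip{s_i(\varepsilon)}{s_i(\varepsilon')}=0$ for every $\varepsilon\in\mathcal C_T$. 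So the vectors $s_i(\varepsilon')$ with $\varepsilon'\notin\mathcal C_T$ are orthogonal to the span $W$ of $\{s_i(\varepsilon):\varepsilon\in\mathcal C_T\}$.

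Let $\Pi$ be the orthogonal projection onto $W$, and set $g(\varepsilon):=\Pi s_i(\varepsilon)$. Then $g$ is again affine in $\varepsilon$. It is nonzero, since $g=s_i\neq0$ on $\mathcal C_T$, and it vanishes off $\mathcal C_T$. Therefore $\supp(g)=\mathcal C_T$, and Lemma~\ref{lem:affine-support} gives $|\mathcal C_T|\geq2^{|S|-1}$, that is, density at least $1/2$. When $S=\varnothing$ the cube has a single point and the statement is trivial, so we may assume $S\neq\varnothing$.

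The cells of distinct types are disjoint and each has density at least $1/2$, so at most two nonconstant types occur. If two occur, their densities must both equal $1/2$. Then the two cells are complementary, which leaves no room for constant branches. In this equality case Lemma~\ref{lem:affine-support} also says each $\supp(g)=\mathcal C_T$ is a short halfcube, and the complement of a short halfcube $\{\chi_B=\tau\}$ is $\{\chi_B=-\tau\}$, again a short halfcube.

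The only delicate point is that orthogonality is available only across different types. This is why I project onto $W$, rather than arguing that $s_i$ itself vanishes off $\mathcal C_T$; that stronger claim is not needed. Projection preserves affineness because $\Pi$ is linear. One should also check that the constant-branch case is covered by Lemma~\ref{lem:type-orthogonality}, since the empty type differs from $T$, and that lemma explicitly includes this case.
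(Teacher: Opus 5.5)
Your proposal is correct and follows essentially the same route as the paper. Like the paper, you project $s_i$ onto the span of its values on $\mathcal C_T$ to get an affine map supported exactly on $\mathcal C_T$, then conclude via Lemma~\ref{lem:affine-support} and the disjointness of type cells.
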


\begin{proof}
We will construct, for each occurring nonconstant type $T$, an
affine vector-valued function whose support is exactly $\mathcal C_T$.
Lemma~\ref{lem:affine-support} will then bound the size of the cell.  If
$S=\varnothing$, there is only one branch, so all conclusions are
immediate.  Assume $S\neq\varnothing$.

Fix $i\in T$, and let
\[
    \mathcal V_{i,T}:=\Span\{s_i(\varepsilon):\varepsilon\in\mathcal C_T\}
\]
Let $\Pi_{i,T}$ be the orthogonal projection onto the Hilbert subspace
$\mathcal V_{i,T}$.   Consider
\[
    h_{i,T}(\varepsilon):=\Pi_{i,T}s_i(\varepsilon).
\]
We consider two case.

If $\varepsilon\in\mathcal C_T$, then
$s_i(\varepsilon)\in\mathcal V_{i,T}$, and therefore
\[
    h_{i,T}(\varepsilon)=s_i(\varepsilon),\qquad
    \norm{h_{i,T}(\varepsilon)}^2=\norm{s_i(\varepsilon)}^2=\frac12.
\]
In particular, $h_{i,T}(\varepsilon)\neq0$.

If $\varepsilon\notin\mathcal C_T$, then $T_\varepsilon\neq T$.
For every $\varepsilon'\in\mathcal C_T$,
Lemma~\ref{lem:type-orthogonality} gives
$\ip{s_i(\varepsilon)}{s_i(\varepsilon')}=0$.  Hence,
\[
    s_i(\varepsilon)\perp
    \Span\{s_i(\varepsilon'):\varepsilon'\in\mathcal C_T\}
    =\mathcal V_{i,T},
\]
and $h_{i,T}(\varepsilon)=\Pi_{i,T}s_i(\varepsilon)=0$.

Combining the two cases gives
\[
    h_{i,T}(\varepsilon)
    =\begin{cases}
       s_i(\varepsilon),&\varepsilon\in\mathcal C_T,\\
       0,&\varepsilon\notin\mathcal C_T.
      \end{cases}
\]
The value in the first case is nonzero.  Thus,
$\supp(h_{i,T})=\mathcal C_T$, and Lemma~\ref{lem:affine-support} yields
$|\mathcal C_T|\geq2^{|S|-1}$, and furthermore, when equality holds, Lemma~\ref{lem:affine-support}  implies that $\mathcal C_T$ is a short halfcube. The rest of the statement follows from the disjointness of different $\mathcal C_T$, as noted after the definition. 
\end{proof}

\subsubsection{The case of two nonconstant types}

\begin{proposition}
\label{prop:two-type-compression}
If two distinct nonconstant branch types occur, then
    $\rv(F)\leq6$.
\end{proposition}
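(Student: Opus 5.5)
By Lemma~\ref{lem:large-type-cells}, if two distinct nonconstant types $T_1,T_2$ occur, their cells $\mathcal C_{T_1}=H$ and $\mathcal C_{T_2}=H^c$ are complementary short halfcubes in $\{+1,-1\}^S$, and there are no constant branches. Write $H=\{\varepsilon:\chi_B(\varepsilon)=\tau\}$ with $|B|\in\{1,2\}$. Every branch is then a signed character on $T_1$ or on $T_2$, so
\[
    F(\varepsilon,y)=\frac{1+\tau\chi_B(\varepsilon)}{2}\,\sigma_1(\varepsilon)\chi_{T_1}(y)
    +\frac{1-\tau\chi_B(\varepsilon)}{2}\,\sigma_2(\varepsilon)\chi_{T_2}(y),
\]
where $\sigma_1,\sigma_2:\{+1,-1\}^S\to\{+1,-1\}$ record the branch signs $\tau_\varepsilon$. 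Outside $S$ the relevant coordinates lie in $T_1\cup T_2$, which has at most $4$ elements. It therefore suffices to show that at most two coordinates of $S$ are relevant, namely those in $B$. The target count is $2+2+2$: $|B|\le 2$ plus $|T_1|,|T_2|\le2$.

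\textbf{Step 1: $\sigma_1$ depends only on $B$.} The aim is to show that $\sigma_1$ restricted to $H$ depends on no coordinate of $S\setminus B$, and similarly for $\sigma_2$ on $H^c$. I will use the state structure rather than degree alone. Fix $i\in T_1$. The map $h_{i,T_1}=\Pi s_i$ is affine in $\varepsilon$ and supported exactly on $H$, so Lemma~\ref{lem:halfcube-factor} gives $s_i$-data of the form $\ell_H(\varepsilon)u_i$ with $u_i$ constant in the relevant sense. A cleaner route is to compare two branches inside $H$ with different signs: if $\varepsilon,\varepsilon'\in H$ and $\tau_\varepsilon\neq\tau_{\varepsilon'}$, then, as in the proof of Lemma~\ref{lem:type-orthogonality}, $F(\varepsilon,y)F(\varepsilon',y\odot w)=\tau_\varepsilon\tau_{\varepsilon'}\chi_{T_1}(w)$. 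Taking $w=\1$ makes the outputs always differ, so $\kappa(y,y)=0$ for all $y$. Fourier uniqueness then gives $\ip{s_i(\varepsilon)}{s_i(\varepsilon')}=0$ for $i\in T_1$. Together with $\norm{s_i}^2=\tfrac12$ on $H$, the unit-norm vectors $\sqrt2\,s_i(\varepsilon)$ for $\varepsilon\in H$ satisfy: they are equal up to phase when the signs agree, and orthogonal when the signs differ.

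\textbf{Step 2: reduce to a sign-valued affine function.} Here I will use Lemma~\ref{lem:halfcube-factor}. The projection of $s_i$ onto the span of $\{s_i(\varepsilon):\varepsilon\in H,\ \tau_\varepsilon=+1\}$, taken as a function of $\varepsilon$, equals $\ell_H(\varepsilon)\,g(\varepsilon)$ with $g$ of degree at most one. Here $g$ is nonzero exactly on $\{\tau=+1\}\cap H$, and $|g|$ is constant there. Parametrizing $H$ by a cube in the free coordinates, as in the example after Definition~\ref{def:short-halfcube}, this makes the restricted function a degree-$\le1$ object whose support is the set where the sign is $+1$. Normalizing via the conjugate-space trick of Lemma~\ref{lem:degree-one-state}, or directly via Lemma~\ref{lem:affine-support} and Lemma~\ref{lem:affine-sign}, forces $\tau_\varepsilon$ on $H$ to be a constant or a signed literal in the free coordinates.

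\textbf{Step 3: the degree-reduction argument (expected main obstacle).} Once $\sigma_1$ and $\sigma_2$ are known to be literals or constants, I will expand $F$ and use Proposition~\ref{prop:support-four}: every Fourier set of size at least $3$ contains $S$, together with $\deg F\le4$. A literal $\varepsilon_k$ with $k\notin B$ multiplies $\chi_{T_1}$ and creates monomials of size up to $|B|+1+|T_1|$ that omit other elements of $S$. Either this contradicts $S\subseteq A$, or it exceeds degree $4$, unless $S=B$ or the literal coincides with a $B$-coordinate. The delicate part is the bookkeeping of possible cancellations between the $H$ and $H^c$ halves when $|T_1|,|T_2|$ are small. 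I will handle it by separating monomials according to their $y$-part: $\chi_{T_1}$ and $\chi_{T_2}$ are distinct characters, so the two halves cannot cancel. Hence $\Rel(F)\cap S\subseteq B$, and $\rv(F)\le|B|+|T_1\cup T_2|\le6$.
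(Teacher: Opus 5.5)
There is a genuine gap. The intermediate target of your Steps 1--3, namely $\Rel(F)\cap S\subseteq B$ with the accounting $|B|+|T_1|+|T_2|$, is false.

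Here is a counterexample. Take orthonormal $e_1,e_2,e_4,e_5$ and put $f_\pm:=(e_1\pm e_2)/\sqrt2$. Let the first query have coefficient vectors $\omega_1=\frac{e_4+e_5}{2\sqrt2}$, $\omega_2=\frac{e_4-e_5}{2\sqrt2}$, $\omega_3=\frac{e_1+e_2}{2}$; these are mutually orthogonal with squared norms $\tfrac14,\tfrac14,\tfrac12$. Let the second query multiply $e_i$ by $x_i$ for $i\in\{1,2,4,5\}$. The final state is $\frac{x_1}{\sqrt2}(x_4e_4+x_3f_+)$ if $x_1=x_2$, and $\frac{x_1}{\sqrt2}(x_5e_5+x_3f_-)$ if $x_1=-x_2$. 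Measuring $\mathcal H_+=\Span\{e_4+f_+,e_5+f_-\}$ against $\mathcal H_-=\Span\{e_4-f_+,e_5-f_-\}$ therefore computes
\[
  F=\frac{1+x_1x_2}{2}\,x_3x_4+\frac{1-x_1x_2}{2}\,x_3x_5
\]
exactly. Here $S=\{1,2,3\}$, $B=\{1,2\}$, the types are $\{4\}$ and $\{5\}$, the branch sign is $\varepsilon_3$ on both cells, and $\rv(F)=5$. So the relevant coordinate $3$ lies in $S\setminus B$. Proposition~\ref{prop:support-four} is not violated, because both top Fourier sets $\{1,2,3,4\}$ and $\{1,2,3,5\}$ contain $S$.

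This breaks two of your steps.
\begin{itemize}
\item Step 3 fails: a literal $\varepsilon_k$ multiplying \emph{both} halves creates no monomial that omits part of $S$.
\item Step 1 fails: $s_4(\varepsilon)=e_4/\sqrt2$ at both $\varepsilon=(1,1,1)$ and $(1,1,-1)$, which carry opposite signs, so these vectors are not orthogonal.
\end{itemize}
The Step 1 deduction was invalid in any case. $\kappa(y,y)$ depends only on $w=y\odot y=\1$, so its vanishing is a single scalar identity, and Fourier uniqueness cannot split it into the individual terms. For $w$ with $\chi_{T_1}(w)=-1$ the outputs coincide, so those $w$ give no constraint. For a literal branch the identity is even automatic: exactness forces $c(\varepsilon)=\tau_\varepsilon\mathsf M s_t(\varepsilon)$, hence $\ip{c(\varepsilon)}{c(\varepsilon')}=-\ip{s_t(\varepsilon)}{s_t(\varepsilon')}$ when the signs differ. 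Conversely, if Step 1 held, the affine map in Step 2 would be supported on $\{\tau_\varepsilon=+1\}\cap H$. Lemma~\ref{lem:affine-support} forbids this unless the sign is constant on $H$, which the example again contradicts.

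What is true is a \emph{joint} bound on each cell. The function $F_H(u,y)=\sigma_1(\iota_\tau(u))\chi_{T_1}(y)$ has at most two relevant variables in total, so a sign depending on a free coordinate of $S$ forces $|T_1|=1$. The paper obtains this without any analysis of the signs:
\begin{enumerate}
\item Project the whole final state onto $\mathcal K_H=\Span\{\varphi_2(\varepsilon,y):\varepsilon\in H,\,y\in Y\}$, which is orthogonal to $\mathcal K_{H^c}$ by Lemma~\ref{lem:type-orthogonality}.
\item The projection vanishes off $H$, so Lemma~\ref{lem:halfcube-factor} writes it as $\ell_H h_H$ with $\deg h_H\le1$. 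On $H$, $h_H=\ell_H\varphi_2$ is normalized and exact.
\item Reparametrize $H$ by the full cube $U\times Y$ and apply Lemma~\ref{lem:degree-one-state} to get $\rv(F_H)\le2$; likewise $\rv(F_{H^c})\le2$.
\item Every relevant $k\notin B$ is relevant for $F_H$ or $F_{H^c}$, so $\rv(F)\le|B|+2+2\le6$.
\end{enumerate}
You should replace Steps 1--3 by this argument. It also removes your implicit reliance on Proposition~\ref{prop:support-four}, whose hypothesis $\rv(F)\ge5$ is not part of the statement.
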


\begin{proof}
By Lemma~\ref{lem:large-type-cells}, the two type cells are complementary
short halfcubes $H,H^c\subseteq\{+1,-1\}^{S}$, and every branch is
nonconstant.  Put $Y:=\{+1,-1\}^{[n]\setminus S}$.
Define the two Hilbert subspaces
\begin{align*}
    \mathcal K_H&:=\Span\{\ket{\varphi_2(\varepsilon,y)}:
       \varepsilon\in H,\ y\in Y\},\\
    \mathcal K_{H^c}&:=\Span\{\ket{\varphi_2(\varepsilon,y)}:
       \varepsilon\in H^c,\ y\in Y\}.
\end{align*}
Lemma~\ref{lem:type-orthogonality} gives
$\mathcal K_H\perp\mathcal K_{H^c}$.  
Let $\Pi_{\mathcal K_H}$ and $\Pi_{\mathcal K_{H^c}}$ denote the respective  orthogonal projections. Define the projected state maps
\[
    \psi_H(\varepsilon,y):=\Pi_{\mathcal K_H}\ket{\varphi_2(\varepsilon,y)},
    \qquad
    \psi_{H^c}(\varepsilon,y):=\Pi_{\mathcal K_{H^c}}\ket{\varphi_2(\varepsilon,y)}.
\]
Since the state belongs to $\mathcal K_H$ on $H$ and to the
orthogonal subspace $\mathcal K_{H^c}$ on $H^c$,
\[
    \psi_H(\varepsilon,y)
    =\begin{cases}
       \ket{\varphi_2(\varepsilon,y)},&\varepsilon\in H,\\
       0,&\varepsilon\in H^c.
      \end{cases}
\]
The analogous formula holds for $\psi_{H^c}$ with $H$ and $H^c$
interchanged.  The path expansion
$\ket{\varphi_2(x)}=\sum_{i,j=0}^n x_ix_jv_{ij}$
has degree at most two, and a fixed linear projection does not increase
degree.  Lemma~\ref{lem:halfcube-factor} therefore gives
\begin{equation}
    \psi_H(\varepsilon,y)=\ell_H(\varepsilon)h_H(\varepsilon,y),
    \qquad
    \psi_{H^c}(\varepsilon,y)=\ell_{H^c}(\varepsilon)h_{H^c}(\varepsilon,y),
    \label{eq:projected-factorization}
\end{equation}
where $h_H$ and $h_{H^c}$ are vector-valued functions of degree at
most one on the full cube $\{+1,-1\}^{S}\times Y$.

For $\varepsilon\in H$, one has $\ell_H(\varepsilon)^2=1$, and
\eqref{eq:projected-factorization} gives
\[
    h_H(\varepsilon,y)=\ell_H(\varepsilon)\ket{\varphi_2(\varepsilon,y)}.
\]
Consequently, $\norm{h_H(\varepsilon,y)}=1$.  Since
$\ell_H(\varepsilon)$ is a real sign, exactness also gives
\begin{equation}    \label{eq:restricted_function_exact_calc}
    \mathsf M h_H(\varepsilon,y)
    =\ell_H(\varepsilon)\mathsf M\ket{\varphi_2(\varepsilon,y)}
    =\ell_H(\varepsilon) F(\varepsilon,y) \ket{\varphi_2(\varepsilon,y)}
    =F(\varepsilon,y)h_H(\varepsilon,y),
    \qquad
    \forall\ \varepsilon \in H.
\end{equation}

We now express the two promised restrictions as Boolean functions on full
cubes.  Write
\[
    H=\{\varepsilon:\chi_B(\varepsilon)=\tau\},
    \qquad
    H^c=\{\varepsilon:\chi_B(\varepsilon)=-\tau\},
\]
where $B\subseteq S$, $1\leq|B|\leq2$, and
$\tau\in\{+1,-1\}$.  Choose $b\in B$ and put
$U:=\{+1,-1\}^{S\setminus\{b\}}$.  For
$\sigma\in\{\tau,-\tau\}$, let
$\iota_\sigma:U\to\{\varepsilon:\chi_B(\varepsilon)=\sigma\}$ be the
bijection that retains every coordinate except $b$ and
inserts the unique value of $\varepsilon_b$ satisfying
$\chi_B(\varepsilon)=\sigma$.  Since $|B|\leq2$, the inserted
coordinate is either constant or a signed copy of one free coordinate.

Define Boolean functions and state maps on the full cube $U\times Y$ by
\[
\begin{aligned}
    F_H(u,y)&:=F(\iota_\tau(u),y),
    &s_H(u,y)&:=h_H(\iota_\tau(u),y),\\
    F_{H^c}(u,y)&:=F(\iota_{-\tau}(u),y),
    &s_{H^c}(u,y)&:=h_{H^c}(\iota_{-\tau}(u),y).
\end{aligned}
\]
The substitution defining $\iota_\sigma$ preserves degree at most one.
The preceding calculation, and its identical analogue on $H^c$, show
that $s_H$ and $s_{H^c}$ are normalized states of degree at most one
which exactly compute $F_H$ and $F_{H^c}$, respectively.
Lemma~\ref{lem:degree-one-state}
therefore gives $\rv(F_H)\leq2$ and $\rv(F_{H^c})\leq2$.

Finally, let $k\notin B$ be relevant for $F$, then, it is easy to see that $k$ is relevant for either $F_H$ or $F_{H^c}$. 
Consequently,
    $\rv(F)\leq|B|+\rv(F_H)+\rv(F_{H^c})\leq2+2+2=6$.
\end{proof}

\subsubsection{Proof of Theorem~\ref{thm:three-outside}}

\begin{proof}[Proof of Theorem~\ref{thm:three-outside}]
Let $Z:=\Rel(F)\setminus S$.
By hypothesis, $|Z|\geq3$.  Fix $j\in Z$, and choose a bichromatic
$j$-edge of the full cube.  Its two endpoints have the same restriction
$\varepsilon$ to $S$.  The corresponding one-query branch
$F_\varepsilon$ therefore depends on $y_j$, so
$j\in T_\varepsilon$.  It follows that
\[
    Z\subseteq
    \bigcup_{\varepsilon\in\{+1,-1\}^{S}}T_\varepsilon.
\]
Every type has size at most two.  Since $|Z|\geq3$, at least two
distinct nonconstant types occur.  Proposition~\ref{prop:two-type-compression}
now gives $\rv(F)\leq6$.
\end{proof}

\subsection{Completing the proof}
\label{sec:completing-proof}

\begin{proof}[Proof of Theorem~\ref{thm:main}]
The construction $G_2$ in \eqref{eq:sharp-example} depends on all six
of its variables and satisfies $\QE(G_2)=2$, proving the lower bound.  For the
upper bound, suppose for contradiction that $\QE(F)\leq2$ and
$\rv(F)\geq7$.  By padding a missing query in the dummy sector, we may
use the two-query representation of Section~\ref{sec:prelim}.
Corollary~\ref{cor:three-outside} provides at least three relevant
coordinates outside the first-query support.  Theorem~\ref{thm:three-outside}
then gives $\rv(F)\leq6$, a contradiction to $\rv(F)\geq7$.
\end{proof}

\bibliographystyle{alpha}
\bibliography{two_exact_queries}

\end{document}